\newtheorem{theorem}{Theorem}
\newtheorem{lemma}[theorem]{Lemma}
\newtheorem{proposition}[theorem]{Proposition}
\newtheorem{corollary}[theorem]{Corollary}
\newtheorem{conjecture}{Conjecture}
\title{Large Independent Sets in Triangle-Free Planar Graphs\thanks{A preliminary version appeared in the Proceedings of the European Symposium on Algorithms~\cite{DvorakMnich2014}.}}
\author{Zden\v{e}k Dvo\v{r}\'{a}k
        \thanks{Computer Science Institute, Charles University, Prague, Czech Republic, \texttt{rakdver@iuuk.mff.cuni.cz}.
        Supported by the project LL1201 (Complex Structures: Regularities in Combinatorics and Discrete Mathematics) of the Ministry of Education of Czech Republic,
	and by project GA14-19503S (Graph coloring and structure) of Czech Science Foundation.}
   \and Matthias Mnich\thanks{Cluster of Excellence MMCI, Campus E1-7, 66123 Saarbr{\"u}cken, \texttt{mmnich@mmci.uni-saarland.de}}
}
\begin{document}

\def\GG {{\cal G}}
\def\LL {{\cal L}}
\def\PP {{\cal P}}
\def\SS {{\cal S}}

\maketitle

\begin{abstract}
   Every triangle-free planar graph on $n$ vertices has an independent set of size at least $(n+1)/3$, and this lower bound is tight.
   We give an algorithm that, given a triangle-free planar graph $G$ on $n$ vertices and an integer $k\ge 0$, decides whether $G$ has an independent set of size at least $(n+k)/3$, in time $2^{O(\sqrt{k})}n$.
   Thus, the problem is fixed-parameter tractable when parameterized by $k$.
   Furthermore, as a corollary of the result used to prove the correctness of the algorithm, we show that there exists $\varepsilon>0$ such that every planar graph of girth at least five on $n$ vertices has an independent set of size at least $n/(3-\varepsilon)$.
\end{abstract}

\medskip

\noindent
\textbf{Keywords.} Planar graphs, independent set, fixed-parameter tractability, treewidth.

\smallskip

\noindent
\textbf{AMS subject classifcations:} 68Q25, 68W05, 68R10.

\pagestyle{plain}

\section{Introduction}
\label{sec:introduction}
Every planar graph is 4-colorable by the deep Four-Colour-Theorem, whose proof was first announced by Appel and Haken in 1976~\cite{AppelHaken1976} and later simplified by Robertson, Sanders, Seymour and Thomas~\cite{RobertsonEtAl1997}.  As a corollary, every planar graph on $n$ vertices has an independent
set of size at least $n/4$.
The proof by Robertson et al.~\cite{RobertsonEtAl1997} comes with a quadratic-time algorithm to find a valid coloring of~$G$ with~4 colors, which can be used to find such an independent set.
Yet, determining the size of a maximum independent set is $\mathsf{NP}$-complete in planar graphs (even if they are triangle-free~\cite{Madhavan1984}).
This motivates a search for an efficient algorithm that decides, for a fixed parameter $k\ge 0$ and an input $n$-vertex planar graph~$G$, whether $G$ has an independent set of size at least $(n + k)/4$.

The problem---which we call {\sc Planar Independent Set Above Tight Lower Bound}, or {\sc Planar Independent Set-ATLB} for
short---has received a lot of attention, although there has been essentially no progress.
In fact, the question of whether {\sc Planar Independent Set-ATLB} is fixed-parameter tractable has been raised several times: first by Niedermeier~\cite{Niedermeier2006}, later by Bodlaender et al.~\cite{BodlaenderEtAl2008}, Mahajan et al.~\cite{MahajanEtAl2009}, by Sikdar~\cite{Sikdar2010}, by Mnich~\cite{Mnich2010}, and by Crowston et al.~\cite{CrowstonEtAl2011}.
Then the problem was raised again in June 2012 as a ``tough customer'', at WorKer 2012~\cite{FellowsEtAl2012}.
We remark that until now, there is not even a polynomial-time algorithm known for the case of $k = 1$, and finding such an algorithm has been an open problem for more than 30 years.
Yet, the existence of such an algorithm for $k = 1$ is certainly a necessary condition for the fixed-parameter tractability of {\sc Planar Independent Set-ATLB}.
The lower bound of $n/4$ on the size of a maximum independent set is tight for an infinite family of planar graphs:
for example, take a set of copies of $K_4$ or $C_8^2$ (the complement of the cube) or the icosahedron, and connect these copies arbitrarily in a planar way.

In this paper, we resolve the analogous problem for triangle-free planar graphs.
By a theorem of Gr{\"o}tzsch~\cite{Grotzsch1959}, every triangle-free planar graph is 3-colorable, and thus admits an independent set that contains at least one-third of its vertices.  Such a coloring, and thus also an independent set, can be found in linear time~\cite{DvorakEtAl2011}.
Later, Steinberg and Tovey~\cite{SteinbergTovey1993} showed that every triangle-free planar graph contains a \emph{non-uniform} 3-coloring, where one color class is guaranteed to contain one vertex more than the other two color classes.
Thus, any $n$-vertex triangle-free planar graph contains an independent set of size at least $(n+1)/3$, when $n\geq 3$.
On the other hand, Jones~\cite{Jones1985} found triangle-free planar graphs on $n$ vertices (for any $n\ge 2$ such that $n\equiv 2\pmod 3$) with maximum independent sets of size exactly $(n+1)/3$; see Figure~\ref{fig:tighttrianglefreegraphs}.
\tikzset{vertex/.style={minimum size=2pt,circle,fill=black,draw, inner sep=0pt},
         decoration={markings,mark=at position .5 with {\arrow[black,thick]{stealth}}}}
\begin{figure}[hbt]
  \centering
  \begin{tikzpicture}[line width=1pt, scale=0.8, rotate=90]
  \path[use as bounding box] (-1.5,1) rectangle (2.5,-7);
    \node (1a) at (0,0)[vertex]{};
    \node (1b) at (1,0)[vertex]{};
    \node (1c) at (2,0)[vertex]{};
    \node (1d) at (1,-1)[vertex]{};
    \node (1e) at (0,-1)[vertex]{};
    
    \node (2c) at (2,-1)[vertex]{};
    \node (2d) at (1,-2)[vertex]{};
    \node (2e) at (0,-2)[vertex]{};
    
    \draw(1a)--(1b);
    \draw(1b)--(1c);
    \draw(1c)--(1d);
    \draw(1d)--(1e);
    \draw(1e)--(1a);
    
    \draw(1d)--(2c);
    \draw(2c)--(2d);
    \draw(2d)--(2e);
    \draw(2e)--(1a);
    
    \node at (1,-2.5){$\vdots$};

    \node (3a) at (0,-3)[vertex]{};
    \node (3b) at (1,-3)[vertex]{};
    \node (3c) at (2,-3)[vertex]{};
    \node (3d) at (1,-4)[vertex]{};
    \node (3e) at (0,-4)[vertex]{};
    
    \draw(3a)--(3b);
    \draw(3b)--(3c);
    \draw(3c)--(3d);
    \draw(3d)--(3e);
    \draw(3e)--(3a);

    \draw (90:0cm) .. controls (220: 6cm)and (290: 11cm) ..(2cm, -1cm);    
    \draw (90:-2.5cm) .. controls (250: 3cm)and (270: 8cm) ..(2cm, -3cm);
  \end{tikzpicture}
\caption{Triangle-free planar graphs on $n$ vertices with maximum independent sets of size exactly~$(n+1)/3$.}
\label{fig:tighttrianglefreegraphs}
\end{figure}
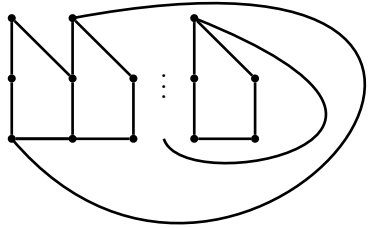
This motivates a search for an efficient algorithm that decides, for a given $n$-vertex triangle-free planar graph~$G$ and integer $k\ge 0$, whether $G$ has an independent set of size at least $(n + k)/3$.
In particular it was open whether for $k = 2$ there is a polynomial-time algorithm.
Notice that it is non-trivial even to solve this problem in time $n^{O(k)}$, as a brute-force approach does not suffice.

As our main result, we show that the problem is fixed-parameter tractable when parameterized by $k$.
\begin{theorem}
\label{thm:mainfpt-trianglefree}
  There is an algorithm that, given any $n$-vertex triangle-free planar graph $G$ and any integer $k\ge 0$, in time $2^{O(\sqrt{k})}n$ decides whether $G$ has an independent set of size at least $(n+k)/3$.
\end{theorem}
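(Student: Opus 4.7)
The plan is to prove Theorem~\ref{thm:mainfpt-trianglefree} by a \emph{win-win} argument combining a structural theorem about triangle-free planar graphs with the standard treewidth-based dynamic programme for maximum independent set.

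The central structural ingredient I would aim for is the following strengthening of the Steinberg-Tovey $(n+1)/3$ bound: there is a constant $c>0$ such that every triangle-free planar graph $G$ on $n$ vertices satisfies
\[
\alpha(G)\ \geq\ \frac{n + c\cdot\operatorname{tw}(G)^2}{3},
\]
so that $\operatorname{tw}(G)\geq \sqrt{k/c}$ already certifies $\alpha(G)\geq (n+k)/3$. To prove this, I would first isolate a local ``sparsity excess'' parameter $s(G)$ measuring how much $G$ deviates from the extremal triangle-free planar graphs of Jones (in Figure~\ref{fig:tighttrianglefreegraphs}), for example the total surplus in face lengths over $4$, or equivalently the number of faces of length at least~$5$ weighted by their excess. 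Since $2|E|=\sum_f|f|$ and each face has length at least~$4$, every such surplus represents a genuine structural slack that I would convert, via a local recolouring/discharging argument extending Steinberg-Tovey, into one additional vertex in the independent set. Next, by the planar excluded-grid theorem of Robertson-Seymour-Thomas, $\operatorname{tw}(G)\geq t$ forces an $\Omega(t)\times\Omega(t)$ grid minor; in a triangle-free planar graph, such a minor forces $s(G)=\Omega(t^2)$, giving the quadratic gain.

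Given the structural theorem, the algorithm is routine. Using a constant-factor treewidth approximation running in time $2^{O(\operatorname{tw})}n$, I would first test whether $\operatorname{tw}(G)\leq c'\sqrt{k}$ for a suitable $c'$. If not, the structural theorem answers \textsc{yes} outright, and in fact a witness independent set can be extracted from the grid-minor certificate. Otherwise, on the tree decomposition of width $O(\sqrt{k})$ I would run the standard $2^{O(\operatorname{tw})}n$ dynamic programme for maximum independent set and compare the optimum against $(n+k)/3$. The total running time is $2^{O(\sqrt{k})}n$, as required.

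The main obstacle is the structural theorem. A single-vertex improvement like the Steinberg-Tovey trick is easy because it is local; a $+\Omega(t^2)$ improvement, in contrast, requires simultaneously extracting many local bonuses in disjoint parts of the graph without them interfering with each other. Thus the hard technical work is to design a family of reducible configurations that (a)~appear abundantly in any triangle-free planar graph with large $s(G)$, (b)~each yield one extra independent-set vertex beyond the $(n+1)/3$ guarantee, and (c)~can be pieced together into a globally consistent independent set of the desired size whenever a large grid minor (or equivalently, large treewidth) is present. The girth-$5$ corollary announced in the abstract is precisely the special case in which $s(G)=\Theta(n)$, and should drop out of the same argument as a sanity check on the structural bound.
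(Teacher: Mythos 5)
Your overall win-win outline matches the paper exactly: reduce the problem to the structural bound $\alpha(G)\ge (n+c\cdot\mathsf{tw}(G)^2)/3$ (which is the paper's Theorem~\ref{thm:indsize}), test the treewidth with a constant-factor approximation, and run the standard $2^{O(\mathsf{tw})}n$ independent-set dynamic programme when the treewidth is $O(\sqrt k)$. The algorithmic half of the argument is correct and identical in spirit to the paper's.

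The gap is in how you propose to prove the structural bound. You suggest taking the ``sparsity excess'' $s(G)$ to be the total surplus of face lengths over~$4$, and arguing (a) each unit of surplus can be converted into one extra independent-set vertex by local recolouring/discharging, and (b) a large grid minor (hence large treewidth) forces $s(G)=\Omega(\mathsf{tw}^2)$. Step~(b) is simply false for this parameter: by Euler's formula the face-length surplus equals $4(n-2)-2|E(G)|$, and for the $m\times m$ grid --- a triangle-free planar graph of treewidth $m$ on $n=m^2$ vertices with $|E|=2m(m-1)$ edges --- this surplus is only $4m-8=O(\mathsf{tw})$, not $\Omega(\mathsf{tw}^2)$. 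So the parameter you chose cannot certify a quadratic gain from treewidth, and your shortcut of extracting a witness independent set directly from the grid-minor certificate would also founder here. Step~(a), which you flag as the hard technical work, is indeed beyond any simple discharging argument: one needs $\Omega(t^2)$ \emph{simultaneous} bonuses that do not interfere, and the only known tool for this is the Dvo\v{r}\'ak--Kr\'al'--Thomas $3$-colouring extension theorem (Proposition~\ref{thm:3coloringextension}), which lets one prescribe monochromatic neighbourhoods at a $D_0$-scattered set of low-degree vertices all at once.

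The paper replaces your face-length surplus by a different $s(G)$: the maximum number of vertices in a $4$-swept subgraph (a subgraph with no separating $4$-cycles all of whose new faces are $4$-faces). For that parameter the bound $\mathsf{tw}(G)\le 41\sqrt{s(G)}$ holds (Lemma~\ref{lemma:tws}), proved directly by repeatedly cutting along minimal separating $4$-cycles and using clique-sums --- no excluded-grid theorem is needed, and the grid is no longer an obstruction since there $s(G)=n$. To convert $s(G)$ into an $\Omega(s(G))$ gain in $\alpha$, the paper (i)~uses the bounded-expansion machinery (oriented augmentations and the ``fatness'' Lemma~\ref{thm:fat}) to extract from the low-degree vertices of the $4$-swept subgraph a $D_1$-scattered set $Q$ of size $\Omega(s(G))$ after deleting a set $X$ with $|X|\le|Q|/14$, (ii)~applies Proposition~\ref{thm:3coloringextension} (via Lemma~\ref{thm:mono}, which handles residual separating $4$-cycles) to $3$-colour $G-X$ so that $\Omega(|Q|)$ vertices of $Q$ have monochromatic neighbourhoods, and (iii)~turns those into a $+\Omega(|Q|)$ improvement over $n/3$ via Lemma~\ref{lemma:large}. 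This is the content of Theorem~\ref{thm:gen}. To salvage your outline, you would need to replace the face-length surplus by something on the order of the $4$-swept-subgraph size, and replace the informal discharging step by an actual invocation of Proposition~\ref{thm:3coloringextension}.
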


Though many different techniques have been devised for solving optimization problems parameterized above lower bounds in fixed-parameter time, none of these techniques is applicable to our problem.
Instead, our algorithm seems to be the first fixed-parameter algorithm for problems parameterized above guarantee that is based on the tree-width of a graph.
The algorithm is an easy corollary of the following result.
\begin{theorem}
\label{thm:indsize}
  There is a constant $c>0$ such that every planar triangle-free graph on $n$ vertices with tree-width $\geq t$ has an independent set of size~$\geq \frac{n+ct^2}{3}$.
\end{theorem}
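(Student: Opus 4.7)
The plan is to prove Theorem~\ref{thm:indsize} by combining the planar excluded-grid theorem with an iterative strengthening of the Steinberg-Tovey bound. Let $G$ be a triangle-free planar graph with $\mathrm{tw}(G)\geq t$. First I would run the routine reductions: if $G$ is disconnected, has a vertex of degree at most~$2$, or admits a cut of size at most~$2$, the smaller pieces satisfy the conclusion by induction and can be recombined with negligible loss of tree-width. If $G$ is bipartite then $\alpha(G)\geq n/2$, and since every planar graph satisfies $\mathrm{tw}(G)=O(\sqrt{n})$, choosing $c$ small enough gives $(n+ct^{2})/3\leq n/2$, so the conclusion is immediate. Thus I may assume $G$ is $3$-connected of minimum degree at least~$3$ and non-bipartite.

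Next I would invoke the planar excluded-grid theorem of Robertson, Seymour and Thomas: a planar graph of tree-width at least~$t$ contains a wall of height $h=\Omega(t)$ as a topological subgraph $W\subseteq G$. This wall contributes $\Omega(h^{2})=\Omega(t^{2})$ pairwise essentially disjoint bounded faces, each of length at least~$6$ after accounting for the lengths of subdivided edges, and each enclosing a distinct region of $G$.

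The core of the proof is to translate this abundance of disjoint face structures into an additive gain of order $t^{2}$ over the Steinberg-Tovey bound $(n+1)/3$. I plan to construct a proper $3$-coloring of $G$ whose largest color class has $n/3+\Omega(t^{2})$ vertices, by iteratively applying a precoloring-extension lemma in the spirit of Gr\"otzsch-Thomassen to the wall one face at a time. At each face, a local modification would force an additional vertex into the designated color class (for instance, by precoloring a carefully chosen pair of non-adjacent cofacial vertices with the designated color and verifying that the resulting list-coloring instance remains solvable by Thomassen's machinery); summing these $\Omega(t^{2})$ local gains yields the required independent set of size $(n+ct^{2})/3$.

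The main obstacle is the simultaneous execution of the $\Omega(t^{2})$ local modifications: a single precoloring step is controlled by known Gr\"otzsch-Thomassen-style theorems, but carrying out many of them in parallel requires a quantitatively strong extension lemma under which the accumulated list constraints remain feasible throughout the peeling, and one must also rule out that neighboring wall faces contribute overlapping or cancelling gains. I expect the technical heart of the proof to be the derivation of such a quantitative precoloring-extension theorem for triangle-free planar graphs, together with a careful choice of the ``good'' configurations inside each wall face, ideally exploiting subdivisions of wall edges to obtain bounded faces of $G$ of length $\geq 8$ where the local reduction can be performed without obstruction.
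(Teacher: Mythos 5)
Your plan departs from the paper's route in a meaningful way, but it leaves the hard part unresolved, and you say so yourself. Let me pinpoint the gaps.

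The paper does \emph{not} go through the excluded-grid theorem or any iterative peeling. Instead it reduces treewidth to the combinatorial parameter $s(G)$ (the size of a largest $4$-swept subgraph), shows $\mathrm{tw}(G)=O(\sqrt{s(G)})$ (Lemma~\ref{lemma:tws}), and then proves the stronger statement that $\alpha(G)\ge (n+c\,s(G))/3$ (Theorem~\ref{thm:gen}). The $\Omega(t^2)$ ``gain sites'' are not wall faces: they are a $D_1$-scattered set $Q$ of low-degree vertices inside the $4$-swept subgraph, found (after deleting a small set $X$) via the bounded-expansion/fatness machinery of Lemma~\ref{thm:fat}. The precoloring step is then done \emph{once}, using the Dvo\v{r}\'ak--Kr\'al'--Thomas theorem (Proposition~\ref{thm:3coloringextension}), which guarantees a single $3$-coloring of $G-X$ in which every vertex of $Q$ has a monochromatic neighbourhood; Lemma~\ref{lemma:large} then converts the monochromatic neighbourhoods into the additive gain. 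There is no sequence of local recolorings and hence no accumulation of list constraints.

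Concretely, what is missing or would fail in your sketch:
(1) \textbf{The quantitative precoloring lemma is the theorem, not a step.} You explicitly defer ``a quantitatively strong extension lemma under which the accumulated list constraints remain feasible''. That is precisely the content that the paper imports from~\cite{DvorakKralThomas2013}, and that theorem has a very specific shape: it precolors a bounded number of $4$-faces and forces monochromatic neighbourhoods around vertices, provided the sites are pairwise at distance at least some constant $D_0$. Nothing in the Gr\"otzsch--Thomassen literature gives the iterative version you describe for free.
(2) \textbf{Separating $4$-cycles.} Proposition~\ref{thm:3coloringextension} requires the graph to have no separating $4$-cycles, and the bulk of the paper's technical work (the notion of $4$-swept subgraph, Lemma~\ref{lemma:tws}, Lemma~\ref{thm:mono}) exists exactly to handle this restriction. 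Your sketch never addresses separating $4$-cycles, and they cannot be wished away: a triangle-free planar graph of large treewidth can be riddled with them.
(3) \textbf{Disjoint is not far apart, and wall faces are not faces of $G$.} You invoke $\Omega(t^2)$ ``pairwise essentially disjoint'' bounded faces of a wall $W\subseteq G$. Disjointness does not give the distance-$D_0$ separation that precoloring-extension theorems require, and because $W$ is only a topological subgraph, each wall face may enclose an arbitrarily large and complicated portion of $G$, not a bounded face of $G$. So the ``local modification inside one wall face'' is not actually local in $G$.
(4) \textbf{The iterative scheme is fragile.} Even granting a single-site extension lemma, you would need to argue that after $\Omega(t^2)$ successive recolorings the constraints remain consistent and that gains from neighbouring faces do not cancel; the paper avoids this entirely by extracting a scattered set once and applying one global theorem.

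Your routine reductions and the bipartite shortcut are fine (the bipartite case does work once $c$ is taken small enough relative to the $\mathrm{tw}=O(\sqrt n)$ bound), but they do not touch the core difficulty. As written, the proposal identifies the right ingredients in spirit (planarity $\Rightarrow$ $\Omega(t^2)$ independent ``slots'', precoloring extension $\Rightarrow$ additive gain) but does not supply the two mechanisms the paper actually uses to make them rigorous: the $4$-swept decomposition to neutralize separating $4$-cycles, and the scattered-set/fatness argument to produce many \emph{far-apart} low-degree sites so that a single application of the Dvo\v{r}\'ak--Kr\'al'--Thomas theorem suffices.
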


According to Theorem~\ref{thm:indsize}, the algorithm of Theorem~\ref{thm:mainfpt-trianglefree} is extremely simple.  First, we test whether
the treewidth of $G$ is $O\bigl(\sqrt{k}\bigr)$, using the linear-time constant factor approximation algorithm of Bodlaender et al.~\cite{BodlaenderEtAl2013}.
If that is the case, we find the largest independent set in~$G$ by dynamic programming in time $2^{O\left(\sqrt{k}\right)}n$,
see e.g. the book of Niedermeier~\cite{Niedermeier2006}.
Otherwise, we answer ``yes''.  If we want to report the independent set of size $(n+k)/3$ when it exists, the algorithm becomes more involved,
requiring an inspection of the proof of Theorem~\ref{thm:indsize}.  We give the details in Section~\ref{sec-algo}.

As a by-product of the proof of Theorem~\ref{thm:indsize}, we also obtain the following result which is of independent interest.
\begin{theorem}
\label{thm:large-ind}
  There exists a constant $\varepsilon>0$ such that every triangle-free planar graph on~$n$ vertices and without separating $4$-cycles has an independent set of size at least $\frac{n}{3-\varepsilon}$.
\end{theorem}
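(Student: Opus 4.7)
The plan is to derive Theorem~\ref{thm:large-ind} from Theorem~\ref{thm:indsize} via a treewidth dichotomy.  Fix a small constant $\delta > 0$ to be chosen at the end, and split on whether $\mathrm{tw}(G) \ge \delta\sqrt{n}$.  If the treewidth is at least $\delta\sqrt{n}$, Theorem~\ref{thm:indsize} immediately gives
\[
\alpha(G) \;\ge\; \frac{n + c(\delta\sqrt{n})^{2}}{3} \;=\; \frac{(1+c\delta^{2})\,n}{3} \;=\; \frac{n}{3-\varepsilon}
\]
for the choice $\varepsilon := 3c\delta^{2}/(1+c\delta^{2}) > 0$, so this regime is immediate.

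The remaining case is $\mathrm{tw}(G) < \delta\sqrt{n}$, where the no-separating-$4$-cycle hypothesis must do the work.  My plan is to exploit that every $4$-cycle of $G$ must bound a face, so $G$ is ``globally almost girth-$5$'': deleting one edge per facial $4$-cycle would leave a girth-$\ge 5$ planar graph on the same vertex set.  Fix a planar embedding of $G$ and run a discharging argument in the spirit of the Gr\"otzsch and Steinberg--Tovey proofs, with initial charges $\deg(v)-4$ on vertices and $\ell(f)-4$ on faces (summing to $-8$ by Euler).  The absence of separating $4$-cycles forces the $4$-faces to sit as ``isolated pockets'' whose total charge contribution is linearly bounded, and from this one would extract $\Omega(n)$ disjoint local witnesses --- low-degree vertices or short faces with small-degree neighborhoods --- around each of which an optimum $3$-coloring can be locally modified to enlarge one color class by one vertex.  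Summing these gains yields $\alpha(G) \ge (1+\eta)\,n/3$ for an absolute constant $\eta > 0$, which is of the required form $n/(3-\varepsilon)$.

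The principal difficulty is the small-treewidth case.  The ``no separating $4$-cycle'' hypothesis is a topological condition on the plane embedding, not inherited by subgraphs or by the bags of a tree decomposition, so a naive induction on $n$ is unavailable.  The plan is therefore to argue globally on the plane embedding via a single discharging scheme, tuned so that the only $4$-cycles present --- the facial ones --- contribute sufficient slack to push the independent-set ratio strictly above $1/3$.  Calibrating these discharging rules so that the gain is genuinely multiplicative (not just an $O(1)$ additive improvement, as in Steinberg--Tovey) is the delicate technical point.
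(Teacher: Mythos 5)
Your high-treewidth branch is fine, and your choice of $\varepsilon$ there is essentially the same calculation the paper makes; but the low-treewidth branch is a genuine gap, and in fact the treewidth dichotomy is the wrong place to split. The paper does not derive Theorem~\ref{thm:large-ind} from Theorem~\ref{thm:indsize}; it derives both from the stronger Theorem~\ref{thm:gen}, which gives $\alpha(G)\ge \frac{n+c\,s(G)}{3}$ where $s(G)$ is the maximum size of a $4$-swept subgraph. When $G$ has no separating $4$-cycles, $G$ is itself $4$-swept, so $s(G)=n$ and the bound is immediate --- there is no residual low-treewidth case at all. Your plan inverts an implication that only goes one way: large treewidth implies large $s(G)$ (via Lemma~\ref{lemma:tws}), but small treewidth does not imply small $s(G)$ (a long path has treewidth $1$, no separating $4$-cycles, and $s(G)=n$), so there really is a regime your Theorem~\ref{thm:indsize}-based argument cannot reach.

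More importantly, the proposed discharging argument for that regime is not a proof, and you have correctly identified exactly why it would be hard: you need $\Omega(n)$ disjoint local gains, each realized by modifying a $3$-coloring near one witness, and then you need these modifications to be simultaneously realizable in a single proper $3$-coloring. But $3$-colorings of planar triangle-free graphs are globally rigid --- fixing the pattern on one short cycle can propagate constraints arbitrarily far --- so ``locally modify an optimum $3$-coloring around each witness'' is not a step you can carry out naively. This is precisely what the paper's Proposition~\ref{thm:3coloringextension} (Dvo\v{r}\'ak--Kr\'al'--Thomas) is for: it says that if the prescribed vertices/faces are pairwise far apart, a $3$-coloring with monochromatic neighborhoods at all of them simultaneously exists. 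The paper then combines this with the scattered-set machinery (Lemma~\ref{thm:fat}, derived from bounded expansion) to produce linearly many pairwise-distant low-degree vertices, and with Lemma~\ref{lemma:large} to convert monochromatic neighborhoods into independent-set surplus. None of this is rediscoverable by a charge-count alone; your sketch has no mechanism to reconcile the $\Omega(n)$ local improvements into one coloring, which is the entire content of the hard case.
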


Let us note an obvious consequence.

\begin{corollary}
\label{cor:large-ind}
  There exists a constant $\varepsilon>0$ such that every planar graph of girth at least $5$ on~$n$ vertices has an independent set of size at least $\frac{n}{3-\varepsilon}$.
\end{corollary}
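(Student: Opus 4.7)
The plan is almost immediate: I would derive Corollary~\ref{cor:large-ind} as a direct specialization of Theorem~\ref{thm:large-ind}. Recall that the girth of a graph is the length of its shortest cycle, so assuming girth at least $5$ forbids every cycle of length $3$ or $4$. In particular, the graph is triangle-free, and it contains no $4$-cycle whatsoever; a fortiori it has no \emph{separating} $4$-cycle (since, trivially, the set of separating $4$-cycles is a subset of the set of all $4$-cycles). Hence the hypotheses of Theorem~\ref{thm:large-ind} are satisfied.

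Concretely, I would let $\varepsilon>0$ be the constant whose existence is asserted by Theorem~\ref{thm:large-ind}, and then, given any planar graph $G$ of girth at least $5$ on $n$ vertices, invoke that theorem on $G$ itself to produce an independent set of size at least $n/(3-\varepsilon)$. No further construction, reduction, or case analysis is needed, and the $\varepsilon$ in the corollary can be taken to be exactly the one provided by the theorem.

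Because the argument consists only of verifying hypotheses, there is essentially no obstacle: the only thing to keep in mind is that Theorem~\ref{thm:large-ind} requires the \emph{absence of separating $4$-cycles} rather than the absence of all $4$-cycles, but this is a weaker assumption and is therefore implied by girth at least $5$. The real mathematical content of the corollary lives entirely in the proof of Theorem~\ref{thm:large-ind}.
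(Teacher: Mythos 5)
Your proposal is correct and is precisely the reasoning the paper intends when it calls the corollary ``an obvious consequence'': girth at least $5$ gives triangle-freeness and excludes all $4$-cycles, hence a fortiori all separating $4$-cycles, so Theorem~\ref{thm:large-ind} applies verbatim with the same $\varepsilon$.
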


A well-known graph parameter giving a lower bound for the independence number of a graph $G$ is the \emph{fractional chromatic number} $\chi_f(G)$, see~\cite{ScheinermanUllman2011} for a definition and other
properties.  Here, let us only recall that $\alpha(G)\ge V(G)/\chi_f(G)$.
The fractional chromatic number of planar graphs of girth at least~$8$ is at most $5/2$ (Dvo\v{r}\'ak et al.~\cite{dvopeter}), implying that for girth 8,
we can set $\varepsilon=1/2$. 
Not much is known about fractional chromatic number of graphs of girth between~$5$ and $7$, but Theorem~\ref{thm:large-ind} and Corollary~\ref{cor:large-ind} make the following conjectures plausible.

\begin{conjecture}
\label{conj:frac1}
  There exists a constant $\varepsilon>0$ such that every triangle-free planar graph without separating $4$-cycles has fractional chromatic number at most $3-\varepsilon$.
\end{conjecture}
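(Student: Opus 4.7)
\medskip
\noindent\textbf{Proof plan.} By LP duality and the definition of the fractional chromatic number, Conjecture~\ref{conj:frac1} is equivalent to the following weighted strengthening of Theorem~\ref{thm:large-ind}: there exists $\varepsilon>0$ such that every triangle-free planar graph $G$ without separating $4$-cycles, equipped with any non-negative vertex weighting $w:V(G)\to\mathbb{R}_{\ge 0}$, admits an independent set $I$ with $w(I)\ge w(V(G))/(3-\varepsilon)$. By scaling and continuity one may assume that $w$ is rational and, after clearing denominators, integer. The plan is therefore to lift the proof of Theorem~\ref{thm:large-ind} from the uniform case $w\equiv 1$ to this weighted setting.

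The most tempting reduction is a blow-up: replace every vertex $v$ by $w(v)$ copies so that independent sets in the resulting graph $G^*$ are in bijection with weighted independent sets in $G$, and then apply Theorem~\ref{thm:large-ind} to $G^*$. For triangle-free graphs the lexicographic blow-up $G[\overline{K}_t]$ preserves $\chi_f$, but it ruins planarity, and I see no planar triangle-free gadget that simultaneously preserves the no-separating-$4$-cycle condition and fits into the local rotation around each vertex; this route appears to be a dead end. The more realistic approach is to re-run the argument underlying Theorem~\ref{thm:large-ind}---presumably a discharging or potential-function argument on a minimum counterexample, combining the Steinberg--Tovey $(n+1)/3$ bound with structural reductions that forbid short chains of $5$-faces---but with vertex counts replaced by $w$-weights throughout. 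Each vertex $v$ would contribute initial charge proportional to $w(v)$, the target excess would become $\varepsilon\cdot w(V(G))/3$, and every local reduction would have to be re-verified with weighted accounting.

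The main obstacle, and probably the reason Conjecture~\ref{conj:frac1} is stated rather than proved, is the treatment of vertices of weight zero. Such vertices are invisible to the objective but fully present in the graph, so a $4$-cycle that is non-separating in $G$ may become ``effectively separating'' once only the positive-weight vertices are counted; conversely, deleting or contracting zero-weight vertices risks creating triangles or separating $4$-cycles and thus leaves the class. A plausible workaround is to give every vertex a floor weight $\eta\ll\varepsilon$ and recover the statement by continuity as $\eta\to 0$, but one then has to ensure that the discharging loses only $O(\eta)$ per vertex, which requires the structural lemmas to be genuinely robust rather than purely combinatorial. Formalising this robustness---and in particular quantifying the weighted excess produced by each of the local reductions used to prove Theorem~\ref{thm:large-ind}---is where the difficulty concentrates.
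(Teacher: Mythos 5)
The statement you were asked to prove is labelled a \emph{conjecture} in the paper: the authors pose it as an open problem and do not offer a proof, only the unweighted evidence of Theorem~\ref{thm:large-ind} and some heuristic remarks. You correctly recognise this, and your LP-duality reduction is sound: $\chi_f(G)\le 3-\varepsilon$ is indeed equivalent to the assertion that for every non-negative vertex weighting $w$ there is an independent set of weight at least $w(V(G))/(3-\varepsilon)$, of which Theorem~\ref{thm:large-ind} is the special case $w\equiv 1$. You are also right that lexicographic blow-ups preserve $\chi_f$ but destroy planarity and cannot be localised while keeping the graph triangle-free and free of separating $4$-cycles. So as an account of \emph{why the conjecture is open} your note is reasonable, and there is no false claim in it.

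Where you go astray is in guessing how Theorem~\ref{thm:large-ind} is actually proved. It is not a discharging or minimal-counterexample argument built on Steinberg--Tovey; the paper derives it from Theorem~\ref{thm:gen} via a quite different chain: Lemma~\ref{thm:deg4} supplies a linear fraction of vertices of degree at most~$4$; Lemma~\ref{thm:fat} (fatness of bounded-expansion classes, proved via oriented augmentations and the pigeonhole Lemma~\ref{thm:spec}) extracts from them a $D_1$-scattered set $Q$ after deleting a comparatively small set $X$; Lemma~\ref{thm:mono}, which repackages the Dvo\v{r}\'ak--Kr\'al'--Thomas precoloring-extension theorem (Proposition~\ref{thm:3coloringextension}), produces a $3$-coloring of $G-X$ in which the vertices of $Q$ have monochromatic neighbourhoods; and Lemma~\ref{lemma:large} then averages three shifted colour classes. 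In this framework the genuine obstruction to a weighted version is not ``re-weighting the discharging'' but the fact that Lemma~\ref{thm:deg4} and Lemma~\ref{thm:spec} are purely cardinality statements: one would need a weighted analogue guaranteeing that a $D_1$-scattered set $Q$ carries a constant fraction of the \emph{weight} of the low-degree vertices while the deleted set $X$ carries only a small fraction of $Q$'s weight. Zero-weight vertices remain a concern, but the right place to confront them is that weighted scattered-set lemma, not the structural-reduction picture you sketch; and since the coloring step (Proposition~\ref{thm:3coloringextension}) is weight-blind, it would survive a weighted extraction unchanged. Identifying this locus of difficulty correctly is, I think, the most one can do short of actually proving the conjecture.
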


\begin{conjecture}
\label{conj:frac2}
  There exists a constant $\varepsilon>0$ such that every planar graph of girth $\geq 5$ has fractional chromatic number at most $3-\varepsilon$.
\end{conjecture}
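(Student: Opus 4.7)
The natural route is via LP duality. Conjecture~\ref{conj:frac2} is equivalent to the following weighted independence ratio statement: for every planar graph $G$ of girth at least~$5$ and every nonnegative weighting $w\colon V(G)\to\mathbb{Q}_{\geq 0}$, there exists an independent set $S\subseteq V(G)$ with $w(S)\geq w(V(G))/(3-\varepsilon)$. Corollary~\ref{cor:large-ind} is exactly the unweighted case $w\equiv 1$. What prevents an immediate reduction from the corollary to the conjecture is that the standard trick---replace each vertex $v$ by $w(v)$ true copies and invoke the unweighted theorem---destroys the girth: duplicating the two endpoints of any edge creates a $4$-cycle, so the blown-up graph no longer lies in the class. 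The family of planar graphs of girth $\geq 5$ is hereditary but not closed under vertex multiplication, which is precisely why an independence ratio bound does not automatically yield a fractional chromatic number bound here.

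My plan is therefore to revisit the proof machinery behind Theorem~\ref{thm:large-ind} and Corollary~\ref{cor:large-ind} and lift every step to the weighted setting. First I would reduce to triangle-free planar graphs without short separating cycles, by splitting $G$ along such a cycle and recursing on each side while tracking weights so that the deficiencies add in a controlled way. On the resulting essentially $4$-cycle-free triangle-free planar graph, I would apply the structural and discharging arguments underpinning Theorem~\ref{thm:large-ind} to locate a reducible configuration and build the heavy independent set iteratively: at each step remove the configuration, solve the weighted problem on the remainder by induction, and extend the selected independent set through the configuration while preserving the $1/(3-\varepsilon)$ weight ratio. This strategy should also transparently yield Conjecture~\ref{conj:frac1}, since the separating $4$-cycle hypothesis there matches the stronger hypothesis I already need for the inductive decomposition.

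The principal obstacle I expect is the weighted analogue of the discharging step. In the unweighted proof each reducible configuration contributes a bounded surplus to $n-(3-\varepsilon)\alpha(G)$, and the total surplus is controlled by Euler's formula. With weights one must instead show that every configuration contributes a surplus to $w(V(G))-(3-\varepsilon)w(S)$ proportional to the \emph{weight} it absorbs, rather than to the number of vertices involved. Adversarial weightings that concentrate almost all mass on a few vertices---for instance, an induced $C_5$ carrying the entire mass, which already forces $3-\varepsilon\geq 5/2$ and so caps the achievable $\varepsilon$---are the natural enemies of such arguments, and the discharging rules have to be redesigned so that charge flows in proportion to the weight on each vertex or face and not merely to unit charges. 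A related secondary difficulty is that peeling off an independent set changes the weighted subproblem on the remainder in a non-local way, so a strictly stronger and more global inductive hypothesis than the one used in the unweighted proof will likely be required.
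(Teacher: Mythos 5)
The statement you are asked about is Conjecture~\ref{conj:frac2}: the paper does not prove it, and explicitly leaves it open (what the paper actually establishes are the unweighted independence-number bounds, Theorem~\ref{thm:large-ind} and Corollary~\ref{cor:large-ind}, together with the remark that these make the fractional conjectures ``plausible''). Your proposal is likewise not a proof. Your LP-duality reformulation is correct (bounding $\chi_f$ by $3-\varepsilon$ is equivalent to a weighted independent-set bound for all weightings), and you correctly identify why the corollary does not imply the conjecture: blowing up vertices to simulate weights creates $4$-cycles and leaves the class. But the remainder is a plan whose central step --- the ``weighted analogue of the discharging step'' --- is exactly the open content of the conjecture, and you do not supply it; announcing it as the principal expected obstacle does not close the gap.

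A further concrete problem is that your plan rests on a misreading of how Theorem~\ref{thm:large-ind} is proved. There is no discharging/reducible-configuration argument in the paper to ``lift'': the proof goes through fatness of classes of bounded expansion (Lemma~\ref{thm:fat}), which extracts a large $D_1$-scattered subset $Q$ of the low-degree vertices after deleting a small set $X$, and then through the precoloring-extension theorem of Dvo\v{r}\'ak, Kr\'al' and Thomas (Proposition~\ref{thm:3coloringextension}) and Lemma~\ref{lemma:large} to gain $|Q|/2$ vertices over $n/3$. This machinery is intrinsically a vertex-counting argument: the gain is proportional to the \emph{number} of far-apart low-degree vertices, and an adversarial weighting can place almost all mass on few vertices of high degree, where the scattered-set/monochromatic-neighborhood gain buys nothing. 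So ``revisiting the proof machinery and lifting every step to the weighted setting'' is not a routine adaptation but would require a genuinely new idea. (Two smaller points: the $C_5$ example only shows $\varepsilon\le 1/2$, whereas the paper already notes the stronger constraint $\varepsilon\le 1/4$ from the Pirnazar--Ullman construction; and the relation you expect between the two conjectures runs opposite to the paper's discussion, which treats Conjecture~\ref{conj:frac1} as the formally stronger statement, with $4$-faces handled by identifying opposite vertices.)
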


Note that if Conjecture~\ref{conj:frac2} holds, then $\varepsilon\le 1/4$ by a construction of Pirnazar and Ullman~\cite{PirnazarUllman2002}.
Furthermore, the girth assumption cannot be reduced to $4$ (or replaced by assuming odd girth at least~$5$) because of the construction in Fig.~\ref{fig:tighttrianglefreegraphs}.
Although Conjecture~\ref{conj:frac1} appears to be stronger than Conjecture~\ref{conj:frac2}, $4$-faces are usually easy to deal with in coloring arguments (by identifying a pair of
opposite vertices), and thus we believe the two conjectures to be equivalent.

Let us also remark that the assumption of Corollary~\ref{cor:large-ind} that $G$ is planar can be relaxed, since every graph on $n$ vertices embeddable
in a surface of genus $g$ can be planarized by removing $O(\sqrt{gn})$ vertices~\cite{Djidjev1984}.

\begin{corollary}
  There exist constants $\varepsilon,c>0$ such that every graph of girth $\geq 5$ and genus $\leq g$ on~$n$ vertices has an independent set of size $\geq \frac{n}{3-\varepsilon}(1-c\sqrt{g/n})$.
\end{corollary}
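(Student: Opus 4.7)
The plan is to derive the corollary by reducing to the planar case handled by Corollary~\ref{cor:large-ind}. The only non-trivial input is the planarizer of Djidjev~\cite{Djidjev1984} already invoked in the paragraph preceding the statement: for some absolute constant $c_1>0$, every $n$-vertex graph of genus at most $g$ contains a set $S$ of size $|S|\le c_1\sqrt{gn}$ whose deletion yields a planar graph.

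Given such a graph $G$ of girth at least $5$ and genus at most $g$, I would fix $\varepsilon$ from Corollary~\ref{cor:large-ind}, set $c:=c_1$, and take a planarizing set $S\subseteq V(G)$ as above. Deleting vertices can only destroy cycles, never create them, so $G':=G-S$ is planar \emph{and} still has girth at least~$5$. Hence Corollary~\ref{cor:large-ind} applies to $G'$ and produces an independent set $I\subseteq V(G')$ with
\[
  |I| \;\ge\; \frac{|V(G')|}{3-\varepsilon} \;\ge\; \frac{n-c_1\sqrt{gn}}{3-\varepsilon} \;=\; \frac{n}{3-\varepsilon}\Bigl(1-c_1\sqrt{g/n}\Bigr).
\]
Independence is preserved under restriction, so $I$ is an independent set of $G$ itself, which is the desired conclusion.

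There is essentially no obstacle: Corollary~\ref{cor:large-ind} carries all the combinatorial content, and Djidjev's theorem supplies the planarization off the shelf. The one small point worth mentioning is that when $c_1\sqrt{g/n}\ge 1$ the right-hand side of the claimed bound is non-positive, so the statement is vacuous in that regime; in particular one need not worry about degenerate cases such as $S=V(G)$, and no additional case analysis is required.
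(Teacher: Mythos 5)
Your proof is correct and matches the paper's intended argument exactly: the paper leaves the corollary unproved but immediately precedes it with the remark that graphs of genus $g$ can be planarized by removing $O(\sqrt{gn})$ vertices, so the intended derivation is precisely your reduction to Corollary~\ref{cor:large-ind} via Djidjev's planarizer. The arithmetic $\frac{n-c_1\sqrt{gn}}{3-\varepsilon}=\frac{n}{3-\varepsilon}\bigl(1-c_1\sqrt{g/n}\bigr)$ and the observations that vertex deletion cannot decrease girth and that an independent set of an induced subgraph is independent in $G$ both check out.
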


\subsection{Organization and proof outline.}
In Section~\ref{sec-tw}, we recall some basic facts about tree-width, especially as relates to planar graphs.
In Section~\ref{sec:classesofgraphswithboundedexpansion}, we review some results on classes of graphs with bounded expansion.
In Section~\ref{sec:largescatteredsets}, we use these results to show that in every planar graph, we can remove a small fraction of vertices so that the resulting graph contains a large set of vertices that are pairwise far apart (Section~\ref{sec:largescatteredsets}).
In Section~\ref{sec:coloringsandindependentsets},  we apply coloring theory developed by Dvo\v{r}\'ak, Kr\'al' and Thomas to such a large set $S$
of far apart vertices, to obtain a $3$-coloring of the graph with further constraints in the neighborhoods of vertices of $S$
which guarantee the existence of a large independent set.
In Section~\ref{sec:proofs}, we combine the results and give proofs of our theorems.
Finally, Section~\ref{sec-algo} describes an algorithm to find an independent set of size at least $(n+k)/3$ when the algorithm of Theorem~\ref{thm:mainfpt-trianglefree}
answers that such a set exists.

\subsection{Related Work}
\label{sec:relatedwork}
From the combinatorial side, the study of lower bounds on the independence number in triangle-free graphs has a long history.
Every $n$-vertex triangle-free graph has an independent set of size $\Omega(\sqrt{n\log n})$, and this bound is tight~\cite{Kim1995}.
Staton~\cite{Sta79} proved that every subcubic $n$-vertex triangle-free graph $G$ satisfies $\alpha(G)\ge \frac{5n}{14}$.
Furthermore, Heckman and Thomas~\cite{HeTh06} showed that if $G$ is additionally planar, then $\alpha(G)\ge \frac{3n}{8}$.

From the algorithmic side, the studying the complexity of maximization problems parameterized above polynomial-time computable lower bounds is an active area of research.
Since the influental survey by Mahajan et al.~\cite{MahajanEtAl2009}, research in this area has led to development of many new algorithmic techniques for fixed-parameter algorithms: algebraic methods~\cite{AlonEtAl2011,CrowstonEtAl2011}, probabilistic methods~\cite{GutinEtAl2010,GutinEtAl2012}, combinatorial methods~\cite{CrowstonEtAl2012,MnichZenklusen2012}, and methods based on linear programming~\cite{CyganEtAl2013}.

\subsection{Preliminaries}
Throughout, we consider graphs that are finite, undirected and loopless, and do not have parallel edges unless explicitly stated otherwise.
For a graph $G$, let~$V(G)$ denote its vertex set and $E(G)$ its set of edges.
The degree of a vertex $v \in V(G)$ is the number $\textnormal{deg}_G(v)$ of edges that are incident to it.
A graph is \emph{planar} if it admits an embedding in the plane such that no two edges cross; a \emph{plane graph} is an embedding of a planar graph without any edge crossings.

\section{Treewidth of Planar Graphs}
\label{sec-tw}
A \emph{tree decomposition} of a graph $G$ is a pair $(T,{\cal B})$, where $\cal B$ is a set of subsets of $V(G)$ (called the \emph{bags} of the decomposition) with $V(G)=\bigcup_{B\in{\cal B}} B$, and $T$ is a tree with vertex set ${\cal B}$, such that for each edge $uv\in E(G)$, there exists $B\in{\cal B}$ containing both $u$ and $v$, and for every $v\in V(G)$, the set $\{B\in {\cal B}: v\in B\}$ induces a connected subtree of $T$.
The width of the decomposition is the size of its largest bag minus one.
The \emph{tree-width} of a graph~$G$, denoted by $\mathsf{tw}(G)$, is the minimum width of its tree decompositions.
Note that the tree-width of a subgraph of $G$ is at most as large as the tree-width of $G$.

Determining tree-width exactly is $\mathsf{NP}$-hard; however, Bodlaender et al.~\cite{BodlaenderEtAl2013} proved the following.
\begin{theorem}
\label{thm-approx}
  There exists an algorithm that for an input $n$-vertex graph $G$ and integer $p\ge 1$, in time $2^{O(p)}n$ either outputs that the treewidth of $G$ is larger than $p$, or gives a tree decomposition of $G$ of width at most $5p+4$.
\end{theorem}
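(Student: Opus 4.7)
The plan is to follow a divide-and-conquer strategy based on balanced separators. The essential structural fact to exploit is that any graph of treewidth at most $p$ admits, for every vertex weighting, a separator of size at most $p+1$ that is balanced with respect to that weighting: one simply picks a bag of a tree decomposition of width $p$ whose removal from the decomposition tree splits it into subtrees none of which dominates more than half of the weight.

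First, I would design a subroutine $\textsc{BalSep}(G,W,p)$ that, in time $2^{O(p)}|V(G)|$ on an $n$-vertex graph $G$ with a subset $W\subseteq V(G)$, either outputs a set $S\subseteq V(G)$ of size at most $cp$ such that every connected component of $G-S$ meets at most $|W|/2$ vertices of $W$, or else correctly reports $\mathsf{tw}(G)>p$. The implementation combines an LP-relaxation of the balanced vertex-separator problem (to find an approximately balanced fractional separator of small cost) with a Menger-type rounding: if the LP has small value, a known integrality gap argument rounds to a separator of size $O(p)$; otherwise, by LP duality one exhibits a system of vertex-disjoint paths between many pairs of vertices in $W$, which certifies a well-linked set (equivalently, a bramble) of order greater than $p$ and hence $\mathsf{tw}(G)>p$.

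Given \textsc{BalSep}, the top-level algorithm constructs the tree decomposition recursively. At each node, the current piece carries a ``boundary'' $B$ of size at most $3p+3$ inherited from the parent call; running \textsc{BalSep} on this piece with $W=B\cup\{\text{heavy vertices}\}$ produces a separator $S$ of size $O(p)$ (or reports failure, which we propagate). The bag at this node is set to $B\cup S$, which with careful bookkeeping fits inside $5p+4$ vertices, and we recurse on each component of $G-S$ with its new boundary $(B\cup S)\cap V(\text{component})$. Standard arguments show each vertex appears in only $O(1)$ active pieces at a time and the pieces shrink geometrically in $|W|$, so summing $2^{O(p)}|V(\text{piece})|$ over the recursion tree yields total running time $2^{O(p)}n$.

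The main obstacle is achieving the single-exponential dependence $2^{O(p)}$ rather than the $2^{O(p\log p)}$ or worse that one gets from older Robertson--Seymour-style algorithms. Naively, deciding whether a balanced separator of size $p$ exists seems to require enumerating $\binom{n}{p}$ candidate sets, costing $n^{\Theta(p)}$; the crux is replacing this enumeration with the LP-plus-duality argument above, which either directly returns a separator or returns the well-linked obstruction. Proving rigorously that the obstruction genuinely forces $\mathsf{tw}(G)>p$, with the right constants in the approximation factor so that the recursion bag size stays bounded by $5p+4$, is the combinatorially delicate heart of the argument, and the linear-in-$n$ running time further requires that \textsc{BalSep} itself avoid any polynomial overhead in $n$ beyond what the max-flow/LP subroutines cost on a graph of bounded separator width.
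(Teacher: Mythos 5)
This theorem is not proved in the paper at all; it is quoted verbatim as a known result of Bodlaender et al.~\cite{BodlaenderEtAl2013}, so there is no in-paper argument to compare you against. Judged against the actual proof in that reference, your sketch has the right top-level shape (recursive decomposition driven by balanced separators, boundary bookkeeping to bound bag sizes) but two real gaps.

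First, the LP-based implementation of \textsc{BalSep} does not deliver what you need. The LP relaxation of balanced vertex separator does \emph{not} have an $O(1)$ integrality gap: the best known rounding (Feige--Hajiaghayi--Lee) loses a factor $\Theta(\sqrt{\log n})$, and a constant-factor polynomial-time approximation is believed not to exist (it is ruled out under the Small Set Expansion hypothesis). So even when the LP value is at most $p+1$, your rounded separator may have size $\omega(p)$ depending on $n$, and then the recursion does not keep bags of size $5p+4$. There is also no "known integrality gap argument" giving $O(p)$ here. What the Bodlaender et al. algorithm actually does is purely combinatorial: it exploits the promise $\mathsf{tw}(G)\le p$ to find an \emph{exact} small balanced separator by enumerating the $2^{O(p)}$ ways the separator can interact with the boundary/weight set and running a Menger/max-flow computation for each candidate split; the certificate of failure is a direct treewidth lower bound, not LP duality. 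This is where the $2^{O(p)}$ factor in the running time comes from.

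Second, you are underestimating the difficulty of the $2^{O(p)}\,n$ bound, which is the actual point of the theorem. Straightforward balanced-separator recursion, even with a linear-time \textsc{BalSep}, gives $2^{O(p)}\,n\log n$ because the recursion depth is $\Theta(\log n)$ and each vertex can participate in $\Theta(\log n)$ calls. The main technical contribution of Bodlaender et al.\ is precisely removing this logarithmic factor, via a two-stage scheme (first building a coarse, width-$O(p^2)$ decomposition quickly, then refining it) together with nontrivial data structures; your remark that pieces "shrink geometrically" does not by itself yield linearity in $n$. Finally, the max-flow/LP calls you invoke are themselves not linear time on the current piece, so the claim that the only overhead is "what the subroutines cost" also needs to be replaced by the bounded-boundary flow computations that Bodlaender et al.\ actually use.
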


Let $G$ be a graph.
Suppose that $G=G'_1\cup G'_2$ and that $G_1\supseteq G'_1$ and $G_2\supseteq G'_2$ are graphs such that $V(G'_1\cap G'_2)$ induces a clique both in $G_1$ and in $G_2$.
In this situation, we say that $G$ is a \emph{clique-sum} of $G_1$ and $G_2$.
That is, $G$ is obtained from $G_1$ and $G_2$ by identifying corresponding vertices of cliques of the same size, and possibly removing edges afterwards.
We need the following well-known fact.
\begin{proposition}
\label{prop-csum}
  If $G$ is a clique-sum of $G_1$ and $G_2$, then $\mathsf{tw}(G)\le \max(\mathsf{tw}(G_1),\mathsf{tw}(G_2))$.
\end{proposition}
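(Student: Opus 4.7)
The plan is to glue together optimal tree decompositions of $G_1$ and $G_2$ into a single tree decomposition of $G$ by adding one edge between bags that cover the shared clique. Concretely, I would start with tree decompositions $(T_1,\mathcal{B}_1)$ of $G_1$ and $(T_2,\mathcal{B}_2)$ of $G_2$ attaining the respective tree-widths, and set $K = V(G'_1 \cap G'_2)$, which by hypothesis induces a clique in both $G_1$ and $G_2$.

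The one nontrivial ingredient is the standard ``clique in a bag'' lemma: in any tree decomposition of a graph $H$, every clique of $H$ is entirely contained in some bag. This is proved by induction on the clique size using the Helly property of subtrees in a tree (every pairwise intersecting family of subtrees of a tree has a common vertex). Applied to $K$ in each $G_i$, this yields bags $B_i \in \mathcal{B}_i$ with $K \subseteq B_i$ for $i = 1, 2$.

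I would then build $T$ as the disjoint union of $T_1$ and $T_2$ together with the new edge $B_1 B_2$, using the bag family $\mathcal{B}_1 \cup \mathcal{B}_2$. Verifying the decomposition axioms is routine: vertex and edge coverage hold because $V(G) = V(G_1) \cup V(G_2)$ and every edge of $G$ lies in $G_1$ or $G_2$ (since $E(G) \subseteq E(G'_1) \cup E(G'_2) \subseteq E(G_1) \cup E(G_2)$); for the connected-subtree property, the only case not inherited from the individual $T_i$ is a vertex $v \in V(G_1) \cap V(G_2)$, which necessarily lies in $K$ and whose bag-subtrees in $T_1$ and $T_2$ both meet $B_1$ and $B_2$ respectively, so they become connected through the glue edge. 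Since every bag in $\mathcal{B}_1 \cup \mathcal{B}_2$ comes from one of the two original decompositions, the width is $\max(\mathsf{tw}(G_1), \mathsf{tw}(G_2))$, yielding the claimed inequality. I do not anticipate any real obstacle; the only subtle point is confirming that $V(G_1) \cap V(G_2) = K$, which is implicit in the intended clique-sum setup.
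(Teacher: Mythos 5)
The paper does not prove this proposition---it is stated as a ``well-known fact'' with no argument given---so there is no internal proof to compare against. Your argument is the standard textbook proof and is correct: take optimal tree decompositions of $G_1$ and $G_2$, invoke the Helly-property lemma that every clique of a graph is contained in some bag of any tree decomposition to find $B_1$ and $B_2$ containing $K=V(G'_1\cap G'_2)$, and join the two trees by the single edge $B_1B_2$. One refinement makes the subtle point you flag at the end unnecessary: with the paper's stated definition, first restrict the decomposition of $G_i$ to the vertex set $V(G'_i)$ (this is still a valid tree decomposition of $G'_i$, does not increase any bag, and still has a bag containing $K$ since $K\subseteq V(G'_i)$); then the two glued decompositions overlap exactly on $K=V(G'_1)\cap V(G'_2)$, so the connected-subtree condition is immediate for every vertex of $G=G'_1\cup G'_2$ without appealing to the implicit normalization $V(G_1)\cap V(G_2)=K$. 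The width bound follows because no new bag is introduced.
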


Robertson, Seymour and Thomas~\cite{RobertsonEtAl1994} gave a bound on the tree-width of planar graphs.
\begin{proposition}[\cite{RobertsonEtAl1994}]
\label{prop-twplansze}
  A planar graph with $n$ vertices has tree-width at most $6\sqrt{n}+1$.
\end{proposition}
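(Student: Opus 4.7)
The plan is to derive the bound as an immediate corollary of the planar grid-minor theorem of Robertson, Seymour and Thomas, which states that every planar graph of treewidth at least $6k-5$ contains the $k\times k$-grid as a minor. Granted this, the proposition follows from a one-line counting argument: the $k\times k$-grid has $k^2$ vertices, and taking minors cannot increase the vertex count, so a planar graph on $n$ vertices cannot contain a $k\times k$-grid minor once $k^2>n$. Choosing the smallest such $k$, namely $k=\lceil\sqrt{n}\rceil+1$, and applying the contrapositive of the grid-minor theorem yields $\mathsf{tw}(G)\le 6k-6\le 6\sqrt{n}+1$, which is exactly the stated bound.

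The actual work lies entirely inside the grid-minor theorem, which I would quote as a separate lemma rather than reprove. The standard route to that lemma passes through the duality between treewidth and brambles (equivalently, tangles): a graph of treewidth at least $t$ admits a bramble of order exceeding $t$. For planar graphs, the embedding can be exploited to show that a bramble of large order realises a ``cross'' pattern of pairwise touching connected subgraphs in the plane, from which one can route the required system of horizontal and vertical vertex-disjoint paths and contract them to obtain the grid minor. Planarity is essential for keeping the dependence between treewidth and grid side-length linear; for general graphs only polynomial bounds are currently known, so the same strategy would give a much weaker result than $6\sqrt{n}+1$.

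The main obstacle is accordingly the grid-minor theorem itself, which forms the entire content of the Robertson-Seymour-Thomas paper cited. For the present paper, however, once that lemma is invoked the proposition is immediate, so the writeup reduces to stating the grid-minor theorem and spending one short paragraph on the counting argument. A plausible alternative route, worth mentioning only to discard, would be to recursively apply the Lipton-Tarjan planar separator theorem to build a tree decomposition of width $O(\sqrt{n})$; this gives the correct order of magnitude but a worse leading constant, so it cannot recover the sharp $6\sqrt{n}+1$ bound without essentially invoking the grid-minor machinery anyway.
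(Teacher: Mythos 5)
The paper gives no proof of this proposition; it is stated as a citation to Robertson, Seymour and Thomas (1994), and the content of that reference is precisely the planar grid-minor theorem you invoke. So your route is the intended one, and the overall strategy (grid-minor theorem plus a vertex-counting argument) is correct.

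There is, however, a small arithmetic slip in your choice of $k$. The smallest integer $k$ with $k^2>n$ is $k=\lfloor\sqrt{n}\rfloor+1$, not $\lceil\sqrt{n}\rceil+1$. With your choice the derived bound is $6k-6=6\lceil\sqrt{n}\rceil$, which can exceed $6\sqrt{n}+1$ when $n$ is a small non-square (e.g.\ $n=2$ gives $12 > 6\sqrt{2}+1\approx 9.49$). Taking $k=\lfloor\sqrt{n}\rfloor+1$ instead yields $\mathsf{tw}(G)\le 6k-6=6\lfloor\sqrt{n}\rfloor\le 6\sqrt{n}$, which is within the stated bound. This is a one-line fix and does not affect the validity of the approach.
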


We need to extend this bound to a related class of graphs.
\begin{lemma}
\label{lemma-addcross}
  Let $G_0$ be a triangle-free plane graph with $n$ vertices.
  Let $G$ be the graph obtained from $G_0$ by adding the edges $uw$ and $vx$
for each $4$-face $uvwx$.
  Then $G$ has tree-width at most $41\sqrt{n}$.
\end{lemma}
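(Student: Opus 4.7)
The plan is to planarize the crossings introduced by the new diagonals and then apply Proposition~\ref{prop-twplansze}. Take the embedding of $G_0$ and, for each 4-face $f$ with corners $u_f, v_f, w_f, x_f$ in cyclic order, draw the diagonals $u_f w_f$ and $v_f x_f$ inside $f$ so that they meet in a single crossing point. Let $G^*$ be obtained from this drawing by replacing the crossing inside each 4-face $f$ by a new vertex $c_f$ of degree $4$ adjacent to $u_f, v_f, w_f, x_f$; then $G^*$ is a plane graph on $V(G_0) \cup \{c_f : f \text{ a 4-face of } G_0\}$. By the standard Euler-formula estimates for triangle-free planar graphs, $|E(G_0)| \leq 2n-4$, and since each 4-face accounts for $4$ of the $2|E(G_0)|$ face-incidences, the number of 4-faces is at most $n-2$ and $|V(G^*)| \leq 2n-2$. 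Proposition~\ref{prop-twplansze} then gives $\mathsf{tw}(G^*) \leq 6\sqrt{2n-2}+1$.

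Next I would lift a tree decomposition $(T, \mathcal{B})$ of $G^*$ of width $w$ to one of $G$ on the same tree $T$: for each bag $B \in \mathcal{B}$, form $B'$ by deleting every $c_f \in B$ and adding the four corners $u_f, v_f, w_f, x_f$ in its place. This is a valid tree decomposition of $G$: edges of $G_0$ are covered as before; each added diagonal $u_f w_f$ is covered by any $B'$ whose original $B$ contained the edge $u_f c_f$ of $G^*$, because then $\{u_f, v_f, w_f, x_f\} \subseteq B'$, and similarly for $v_f x_f$; and for each $v \in V(G_0)$, the set of bags $B'$ containing $v$ is the union of $v$'s original subtree with the subtrees of all $c_f$ whose 4-face has $v$ as a corner, and each such $c_f$-subtree meets $v$'s subtree because $v c_f \in E(G^*)$, so the union is connected.

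Since each $c_f$ is replaced by at most four real vertices, $|B'| \leq 4|B|$, and therefore $\mathsf{tw}(G) \leq 4(\mathsf{tw}(G^*)+1) - 1 \leq 24\sqrt{2n-2} + 7$. The dominant term $24\sqrt{2}\,\sqrt{n} \approx 33.94\sqrt{n}$ stays safely under $41\sqrt{n}$ for $n$ not too small, and the remaining small values of $n$ are absorbed by the trivial bound $\mathsf{tw}(G) \leq n-1$. The main obstacle is precisely this constant-tracking: the naive fourfold blow-up of bag size consumes most of the slack between the planar constant $6\sqrt{2}$ and the target $41$, so the argument cannot afford any losses beyond this single substitution; I would verify that no sharper replacement (for instance, only adding the two diagonal-endpoints of $c_f$ one needs per covered diagonal) is required to land under $41\sqrt{n}$.
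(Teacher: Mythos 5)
Your proof is correct and takes essentially the same approach as the paper: introduce an auxiliary planar graph by placing a new degree-4 vertex inside each 4-face adjacent to its four corners, apply Proposition~\ref{prop-twplansze} to that graph, and lift the tree decomposition back to $G$ by substituting the four corners for each new vertex in every bag. (The paper phrases the auxiliary vertex as one adjacent to all of $V(f)$ rather than as a planarized crossing of the two diagonals, but the resulting graph is the same.) One small remark: you need not worry about ``small $n$'' at all --- the inequality $24\sqrt{2n-2}+7 \le 41\sqrt{n}$ reduces to $7 \le (41-24\sqrt{2})\sqrt{n}$ with $41-24\sqrt{2}\approx 7.06$, which already holds for every $n\ge 1$.
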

\begin{proof}
  By Euler's formula, since $G_0$ is triangle-free, it has at most $n$ faces.
  Let $R$ be the set of $4$-faces of $G_0$.
  Let $G_1$ be the graph obtained from $G_0$ by adding a vertex $v_f$ adjacent to all vertices of $f$ for each $f\in R$.
  Note that $G_1$ is planar and $|V(G_1)|\le 2n$, and thus by Proposition~\ref{prop-twplansze}, $G_1$ has a tree decomposition $(T, {\mathcal B}_1)$
with bags of size at most $6\sqrt{2n}+2$.
  Let $\mathcal B$ be obtained from ${\mathcal B}_1$ by replacing all appearances of $v_f$ by $V(f)$.
  Observe that $(T,\mathcal B)$ is a tree decomposition of $G$ with bags of size at most $24\sqrt{2n}+8\le 41\sqrt{n}+1$.
  Therefore, $G$ has tree-width at most $41\sqrt{n}$.
\end{proof}

Let~$G$ be a plane graph.
A subgraph $G_0$ of $G$ is \emph{$4$-swept} if $G_0$ has no separating $4$-cycles and every face of $G_0$ which is not a face of $G$ has length~$4$.
Let $s(G)$ denote the maximum number of vertices of a $4$-swept subgraph of $G$.

\begin{lemma}
\label{lemma:tws}
  Every plane graph $G$ has tree-width at most $41\sqrt{s(G)}$.
\end{lemma}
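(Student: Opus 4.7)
\emph{Proof plan.} I plan to prove the lemma by strong induction on $|V(G)|$. Let $G_0$ be a $4$-swept subgraph of $G$ with $|V(G_0)|=s(G)$. If $G_0 = G$ then $s(G) = |V(G)|$ and Proposition~\ref{prop-twplansze} gives $\mathsf{tw}(G) \le 6\sqrt{|V(G)|}+1 \le 41\sqrt{s(G)}$, so I assume $G_0 \subsetneq G$.

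Let $G_0^+$ denote the graph obtained from $G_0$ by adding both diagonals $uw, vx$ to each $4$-face $uvwx$ of $G_0$. Since $G_0$ is triangle-free as a subgraph of $G$, Lemma~\ref{lemma-addcross} yields $\mathsf{tw}(G_0^+) \le 41\sqrt{s(G)}$. For each $4$-face $F = uvwx$ of $G_0$ that is not a face of $G$, let $H_F$ be the subgraph of $G$ contained in the closed disc bounded by $F$ (including $F$ itself) and set $H_F^+ := H_F \cup \{uw, vx\}$. The graph $G^+ := G \cup \{uw, vx \mid F \text{ such a } 4\text{-face}\}$ is an iterated clique-sum of $G_0^+$ and the various $H_F^+$ along the $4$-cliques on $V(F)$; combining Proposition~\ref{prop-csum} with $\mathsf{tw}(G) \le \mathsf{tw}(G^+)$ gives
$$\mathsf{tw}(G) \le \max\!\bigl(\mathsf{tw}(G_0^+),\ \max_F \mathsf{tw}(H_F^+)\bigr).$$

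The remaining task is to bound $\mathsf{tw}(H_F^+) \le 41\sqrt{s(G)}$ for each such $F$. Since $H_F$ is a triangle-free plane graph in which $F$ is itself a $4$-face, Lemma~\ref{lemma-addcross} applied to $H_F$ immediately gives $\mathsf{tw}(H_F^+) \le 41\sqrt{|V(H_F)|}$; if $|V(H_F)| \le s(G)$ this suffices. Otherwise I invoke the inductive hypothesis on $H_F$, which applies because $|V(H_F)| < |V(G)|$ whenever $G_0$ has a vertex outside $V(F)$ (the boundary case $|V(G_0)|=4$ being handled separately by Proposition~\ref{prop-twplansze}). This yields $\mathsf{tw}(H_F) \le 41\sqrt{s(H_F)}$, and a short exchange argument using the maximality of $G_0$---any $4$-swept subgraph of $H_F$ can be pasted with the portion of $G_0$ outside $F$ to yield a $4$-swept subgraph of $G$---then gives $s(H_F) \le s(G)$. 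Finally, one converts the tree decomposition of $H_F$ into one of $H_F^+$ of the same width by placing the cofacial set $V(F)$ into a single bag.

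The chief obstacle is this final step: because $H_F^+$ is non-planar (the two added diagonals necessarily cross in any planar embedding of $H_F$ with $F$ on the outer face), the inductive hypothesis cannot be applied to $H_F^+$ directly. The workaround exploits that the four vertices of $V(F)$ lie on a common face of the plane graph $H_F$, which allows one to sweep them into a single bag of a tree decomposition of $H_F$ without enlarging its width, so that both diagonals are covered for free. Care is also needed with the degenerate base cases and with verifying that the pasted subgraph inherits the $4$-swept property required by the exchange argument.
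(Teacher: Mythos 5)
The paper's proof is a single, global decomposition: it repeatedly peels off a \emph{minimal} separating $4$-cycle $C_i$ (one bounding the smallest open disk), taking $H_i$ to be the subgraph drawn inside the closed disk, and continues with what remains outside. Each $H_i$ is automatically $4$-swept (minimality of the disk guarantees no separating $4$-cycles, and the outer face of $H_i$ is bounded by the $4$-cycle $C_i$), so $|V(H_i)|\le s(G)$ and Lemma~\ref{lemma-addcross} bounds $\mathsf{tw}(H_i')\le 41\sqrt{s(G)}$ directly. The clique-sum lemma then finishes the proof in one step, with no induction and no tree-decomposition surgery.

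Your proposal instead picks a \emph{maximum} $4$-swept subgraph $G_0$ and recurses on the pieces $H_F$ inside the $4$-faces of $G_0$. This runs into two concrete problems that your sketch does not resolve.

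First, the exchange argument for $s(H_F)\le s(G)$ is wrong as stated. If $J$ is a $4$-swept subgraph of $H_F$ with a vertex strictly inside $F$, then $J\cup G_0$ contains the $4$-cycle $F$ with vertices of $J$ inside it and vertices of $G_0$ outside it, so $F$ becomes a \emph{separating} $4$-cycle of $J\cup G_0$, which therefore fails to be $4$-swept. Pasting with $G_0$ is exactly the wrong move; one would have to argue (with care about the outer face of $J$) that $J$ alone is already $4$-swept in $G$, which you neither state nor verify.

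Second, the step you flag as ``the chief obstacle'' is handled by a claim that is simply false: one cannot in general convert a tree decomposition of a plane graph into one of the same width with a prescribed cofacial set of four vertices in a single bag. A long cycle has treewidth $2$ while all its vertices lie on a common face; and even in your setting, $H_F$ could have treewidth $2$ (say $F=uvwx$ plus a single interior vertex joined to $u$ and $w$) while $H_F^{+}\supseteq K_4$ has treewidth $3$. The width bound $\mathsf{tw}(H_F)\le 41\sqrt{s(H_F)}$ from the inductive hypothesis does not transfer to $H_F^{+}$ by any width-preserving bag manipulation; you would at best get $\mathsf{tw}(H_F)+O(1)$, which overshoots $41\sqrt{s(G)}$.

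There is also a termination issue in the degenerate case $|V(G_0)|=4$: then $H_F$ can be all of $G$, the induction makes no progress, and falling back to Proposition~\ref{prop-twplansze} gives $6\sqrt{|V(G)|}+1$, which need not be $\le 41\sqrt{s(G)}=82$. The paper's peeling argument avoids all of these difficulties.
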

\begin{proof}
  Let $G_0$, \ldots, $G_{m-1}$ and $H_1$, \ldots, $H_m$ be subgraphs of $G$ obtained as follows.
  We set $G_0=G$.
  Suppose that $G_{i-1}$ was already constructed, for some $i\ge 1$.
  If $G_{i-1}$ contains no separating $4$-cycle, then let $m=i$ and $H_m=G_{i-1}$.
  Otherwise, let $C_i$ be a separating $4$-cycle in $G_{i-1}$ bounding as small open disk $\Lambda_i$ as possible.
  Let $H_i$ be the subgraph of $G_{i-1}$ drawn in the closure of $\Lambda_i$, and let $G_i$ be the subgraph of $G_{i-1}$ drawn in the complement of $\Lambda_i$.

  Observe that $H_1$, \ldots, $H_m$ are $4$-swept subgraphs of $G$, that $G=H_1\cup \ldots\cup H_m$, and that for $1\le i<j\le m$, the intersection of $H_i$ and $H_j$ is a part of a boundary of a $4$-face both in $H_i$ and $H_j$.
  For $1\le i\le m$, let $H'_i$ be the graph obtained from $H_i$ by adding the edges $uw$ and $vx$ for each $4$-face $uvwx$.
  Then $G$ is a clique-sum of $H'_1$, \ldots, $H'_m$, and $\mathsf{tw}(H_i)\le 41\sqrt{|V(H_i)|}\le 41\sqrt{s(G)}$ for $1\le i\le m$.    
  By Proposition~\ref{prop-csum}, we conclude that $G$ has tree-width at most $41\sqrt{s(G)}$.
\end{proof}

\section{Classes of Graphs with Bounded Expansion}
\label{sec:classesofgraphswithboundedexpansion}
In this section, we survey results on classes of graphs with bounded expansion that we need in the paper.
Let $G$ be a graph and let $r\ge0$ be an integer.
Let us recall that a graph $H$ is an \emph{$r$-shallow minor} of $G$ if $H$ can be obtained from a subgraph of $G$ by contracting vertex-disjoint subgraphs of radii at most $r$ and deleting the resulting loops and parallel edges.
Following Ne\v{s}et\v{r}il and Ossona de Mendez~\cite{NesetrilOssonadeMendez2008}, we denote by $\nabla_r(G)$ the maximum of $|E(G')|/|V(G')|$ over all $r$-shallow minors $G'$ of $G$.
Thus, $\nabla_0(G)$ is the maximum of $|E(G')|/|V(G')|$ taken over all subgraphs $G'$ of $G$.
Since every subgraph of a graph $G$ has a vertex of degree at most $2\nabla_0(G)$, we see that $G$ has an (acyclic) orientation with maximum in-degree at most~$2\nabla_0(G)$.

A class $\GG$ of graphs has \emph{bounded expansion} if there exist constants $c_0$, $c_1$, \ldots such that $\nabla_r(G)\le c_r$ for every $G\in\GG$ and $r\ge 0$.
Many natural classes of sparse graphs have bounded expansion.
Here, we only need that the class of planar graphs has bounded expansion; we refer to the book by Ne\v{s}et\v{r}il and Ossona de Mendez~\cite{nesbook} for
more information on the topic.

Let $D$ be a directed graph, and let $D'$ be a directed graph obtained from $D$ by adding, for every pair of vertices $x,y\in V(D)$,
\begin{itemize}
  \item the edge $xy$ if $D$ has no edge from $x$ to $y$ and there exists a vertex $z\in V(D)$ such that $D$ has an edge oriented from $x$ to $z$ and an edge oriented from $z$ to $y$ ({\em transitivity}), and
  \item either the edge $xy$ or the edge $yx$  if $x$ is not adjacent to $y$ and there exists a vertex $z$ such that $D$ has an edge oriented from $x$ to $z$ and an edge oriented from $y$ to $z$ ({\em fraternality}).
\end{itemize}
We call $D'$ an {\em oriented augmentation} of $D$.

Let $G$ be a graph. 
We construct a sequence $D_0$, \ldots, $D_\ell$ of directed graphs as follows.
Let $D_0$ be an orientation of~$G$ with maximum in-degree at most~$2\nabla_0(G)$.
For $1\le i\le \ell$, let $D_i$ be an oriented augmentation of $D_{i-1}$, in that the orientations of the edges added according to the fraternality rule are chosen so that the subgraph of $D_i$ formed by these edges has maximum in-degree at most
$2\nabla_0(G_i)$, where $G_i$ is the underlying undirected graph of $D_i$.
This is possible, because~$G_i$ itself has an orientation with maximum in-degree at most $2\nabla_0(G_i)$.
We say that~$D_{\ell}$ is an {\em $\ell$-th oriented augmentation} of $G$.
We use the following important property of classes of graphs with bounded expansion~\cite{NesetrilOssonadeMendez2008}.

\begin{proposition}
\label{prop:augmexp}
  Let $\GG$ be a class of graphs and let $\ell\ge 0$ be an integer.
  If $\GG$ has bounded expansion, then there exists an integer $m_d\ge 0$ such that all $\ell$-th oriented augmentations of graphs from $\GG$
  have maximum indegree at most $m_d$.
\end{proposition}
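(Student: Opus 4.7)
The plan is to proceed by induction on $\ell$, with the bound $m_d$ depending only on $\GG$ and $\ell$. The base case $\ell=0$ is immediate from the construction: the orientation $D_0$ is chosen with maximum in-degree at most $2\nabla_0(G)\le 2c_0$, which is a uniform bound since $\GG$ has bounded expansion.

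For the inductive step, suppose $D_{\ell-1}$ has maximum in-degree at most some constant $m_{\ell-1}$. I would bound the in-degree at an arbitrary vertex $y$ of $D_\ell$ by partitioning its in-edges into three groups. First, the edges inherited from $D_{\ell-1}$ contribute at most $m_{\ell-1}$. Second, each transitive in-edge $xy$ arises from some directed path $x\to z\to y$ in $D_{\ell-1}$; there are at most $m_{\ell-1}$ choices for the intermediate in-neighbor $z$ of $y$, and for each such $z$ at most $m_{\ell-1}$ choices for $x$, giving at most $m_{\ell-1}^2$ such edges. Third, the fraternal in-edges at $y$ number at most $2\nabla_0(G_\ell)$ by the way the construction orients them. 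So the entire task reduces to bounding $\nabla_0(G_\ell)$ uniformly over $G\in\GG$.

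The structural heart of the proof, and the main obstacle, is this uniform bound on $\nabla_0(G_\ell)$. The key observation I would establish by a side induction is that every edge of $G_i$ corresponds to a walk of length at most $2^i$ in $G$: this holds trivially for $i=0$, and each augmentation step replaces a new edge by a path of length $2$ in the previous underlying graph, so the walk-length at most doubles. Given any subgraph $H\subseteq G_\ell$, assembling these short realizing walks exhibits $H$ essentially as an $O(2^\ell)$-shallow topological minor of $G$. The technical nuisance is that walks realizing different edges of $H$ may share interior vertices, so $H$ is not literally a topological minor in the strict vertex-disjoint sense; the cleanest way around this is to invoke the standard equivalence of Ne\v{s}et\v{r}il and Ossona de Mendez between bounded expansion phrased via $\nabla_r$ and via topological grads $\tilde{\nabla}_r$, which packages exactly the required overlap bookkeeping.

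Under this equivalence, bounded expansion of $\GG$ yields a uniform bound $\nabla_0(G_\ell)\le C(\ell,\GG)$, and the induction closes with $m_\ell\le m_{\ell-1}+m_{\ell-1}^2+2C(\ell,\GG)$, giving a constant $m_d$ depending only on $\GG$ and $\ell$ as required. The nontrivial effort lies entirely in the structural/density step; the vertex-level in-degree accounting is then straightforward counting.
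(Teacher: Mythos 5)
The paper does not prove this proposition: it is imported verbatim from Ne\v{s}et\v{r}il and Ossona de Mendez~\cite{NesetrilOssonadeMendez2008}, so there is no in-paper proof to compare your attempt against. Judged on its own terms, your outline gets the easy parts right but the crux of the argument has a genuine gap.

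The induction scaffold is fine. The base case bound $2\nabla_0(G)\le 2c_0$ is correct, the count of at most $m_{\ell-1}^2$ transitive in-edges at a vertex $y$ (choose the intermediate in-neighbour $z$ of $y$, then an in-neighbour $x$ of $z$) is correct, and the fraternal in-edges are at most $2\nabla_0(G_\ell)$ by the definition of the construction. So you have correctly reduced the problem to bounding $\nabla_0(G_\ell)$ uniformly, and the observation that every edge of $G_\ell$ is realized by a walk of length at most $2^\ell$ in $G$ is also correct.

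The gap is in the sentence that claims the realizing walks exhibit a subgraph $H\subseteq G_\ell$ ``essentially as an $O(2^\ell)$-shallow topological minor of $G$,'' and that the $\nabla_r$ versus $\tilde\nabla_r$ equivalence ``packages exactly the required overlap bookkeeping.'' That equivalence is a statement relating two density invariants of a fixed graph; it presupposes that a topological minor relationship (vertex-disjoint subdivision paths) actually holds. It does not convert a collection of short but mutually overlapping realizing walks into a topological minor, which is the actual situation here. The overlap is not a ``technical nuisance'' that a known equivalence absorbs: if you drop the in-degree constraint, overlapping witnesses make the claim outright false (orienting a star's edges toward its centre makes the first fraternal graph a clique, which is not a shallow minor of the star), so any correct proof must use the bounded in-degree of $D_{\ell-1}$ in an essential way to control the overlaps, not only to count the resulting in-edges. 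Concretely, the transitive edges are harmless, since orienting each transitive edge $xy$ with witness $x\to z\to y$ toward $y$ already gives in-degree at most $m_{\ell-1}^2$; the real difficulty is the fraternal edges, where a vertex can have unboundedly many out-neighbours and hence unboundedly many fraternal neighbours. Ne\v{s}et\v{r}il and Ossona de Mendez handle this via a careful minor construction (grouping each witness with one of its boundedly many in-neighbours to form radius-$1$ branch sets, and then a double count against $\nabla_1$ of the original graph and against $\binom{\Delta^-}{2}$, the maximum number of fraternal edges a single witness can certify). That is the nontrivial content of the proposition; your sketch acknowledges the difficulty exists but does not supply the argument, and the cited equivalence does not substitute for it.
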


\section{Large Scattered Sets}
\label{sec:largescatteredsets}
For an integer $d\ge 1$ and a graph $G$, a vertex set $Q\subseteq V(G)$ is \emph{$d$-scattered} if the distance between any two vertices of $Q$ in $G$ is greater than~$d$.
In this section we discuss graph classes whose members admit large scattered subsets, possibly after removal of a small number of vertices.

For integers $d,m,N\ge 1$ and $r\ge 0$, a graph $G$ is \emph{$(d,r,m,N)$-wide} if for any set $S\subseteq V(G)$ of size at least~$N$, there exists a set $Q\subseteq S$ of size at least $m$ and a set $X\subseteq V(G)$ of size at most $r$ such that $Q$ is $d$-scattered in $G-X$. 
For integers $d\ge 1$ and $r\ge 0$, a class of graphs $\GG$ is \emph{$(d,r)$-wide} if for every $m$, there exists $N$ such that
every graph in $\GG$ is $(d,r,m,N)$-wide.
A class of graphs $\GG$ is \emph{uniformly quasi-wide} if for every integer $d\ge 1$ there exists an integer $r\ge 0$ such that $\GG$ is $(d,r)$-wide.
By results of Ne\v{s}et\v{r}il and Ossona de Mendez~\cite{npom-nd1}, every class of graphs with bounded expansion is uniformly quasi-wide (in fact, they prove a stronger result concerning nowhere-dense graph classes).

We need a variant of wideness where the value of $N$ is linear in $m$.
Note that there exist classes of graphs with bounded expansion which do not have this property---for instance, consider the class containing the graphs $H_a$ for each integer $a>0$, where $H_a$ is the disjoint union of $a$ stars with $a$ rays.
With $S = V(H_a)$, the largest possible $2$-scattered set after removal of $r$ vertices has size $ra+(a-r)=\Theta(\sqrt{|S|})$.
However, note that we can obtain a $d$-scattered set of size $a^2=|S|-a$ at the cost of removing $a$ vertices, which suggests the variant of wideness where the set of removed vertices is small relatively to the size of the $d$-scattered set.

For integers $d,t,K\ge 1$, a graph $G$ is \emph{$(d,t,K)$-fat} if for any set $S\subseteq V(G)$ there exists a set $Q\subseteq S$ of size at least $|S|/K$ and a set $X\subseteq V(G)\setminus Q$ of size at most $|Q|/t$ such that $Q$ is $d$-scattered in $G-X$.
A class of graphs $\GG$ is \emph{fat} if for all integers $d,t\ge 1$, there exists $K$ such that every graph in $\GG$ is $(d,t,K)$-fat.

We are going to show that every class of graphs with bounded expansion is fat.
However, first we need to derive the following auxiliary result.

\begin{lemma}
\label{thm:spec}
  Let $c,t\ge 1$ be integers, and let $K_0=c^{2c}(2c+2)^{c+1}t^c$.
  Let $G$ be a bipartite graph with parts $S$ and $Z$.
  If all vertices in $S$ have degree at most~$c$, then there exist $Q\subseteq S$ and $X\subseteq Z$ such that $|Q|\ge |S|/K_0$, $|X|\le |Q|/t$ and for all distinct $u,v\in Q$, all common neighbors of $u$ and $v$ belong to~$X$.
\end{lemma}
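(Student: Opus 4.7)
The plan is to prove the lemma by induction on $c$. The base case $c = 0$ is trivial: every vertex in $S$ is isolated, so $Q = S$ and $X = \emptyset$ work (with $K_0 \geq 1$).

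For the inductive step, I would view the family $\{N(v) : v \in S\}$ as a hypergraph on $Z$ and let $M \subseteq S$ be a maximum subset whose neighborhoods are pairwise disjoint (that is, $M$ indexes a maximum matching in this hypergraph). Set $X_1 = \bigcup_{v \in M} N(v)$, so that $|X_1| \leq c|M|$; by maximality, every non-isolated $v \in S$ satisfies $N(v) \cap X_1 \neq \emptyset$. I then dichotomize on the size of $|M|$. If $|M|$ exceeds a suitable threshold $\tau$, take $Q = M$ together with all isolated vertices of $S$, and $X = \emptyset$: the matching property makes the required condition trivial, and $|Q| \geq \tau$. Otherwise, every $v \in S$ satisfies $|N(v) \setminus X_1| \leq c-1$, so the bipartite graph with parts $S$ and $Z \setminus X_1$ has all $S$-side degrees at most $c-1$. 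I apply the induction hypothesis to this graph with a slightly larger parameter $t' > t$, obtaining $Q \subseteq S$ and $X_2 \subseteq Z \setminus X_1$ with $|Q| \geq |S|/K_0(c-1,t')$, $|X_2| \leq |Q|/t'$, and $(N(u) \setminus X_1) \cap (N(v) \setminus X_1) \subseteq X_2$ for distinct $u, v \in Q$. Taking $X = X_1 \cup X_2$ then gives $N(u) \cap N(v) \subseteq X$, and $|X| \leq c\tau + |Q|/t' \leq |Q|/t$ whenever $\tau$ and $t'$ are chosen compatibly (namely, so that $c\tau \leq |Q|(1/t - 1/t')$).

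The main obstacle is choosing $\tau$ and $t'$ so that the recursive bookkeeping matches the claimed $K_0 = c^{2c}(2c+2)^{c+1}t^c$. A naive doubling such as $t' = 2t$ accumulates a factor $2^{c(c+1)/2}$ over the $c$ recursive levels, which overwhelms $K_0$ for large $c$. The correct choice is the mild scaling $t' = t(c+1)/c$, giving a per-level multiplier of $c(c+1)t$ and the recurrence $K_0(c,t) \leq c(c+1)t \cdot K_0(c-1, t(c+1)/c)$; unrolling yields $K_0(c,t) \leq c!(c+1)^c t^c$, which is routinely bounded by $c^{2c}(2c+2)^{c+1}t^c$ since $c!(c+1)^c \leq c^c \cdot 2^c c^c = 2^c c^{2c} \leq c^{2c} (2c+2)^{c+1}$. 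Allowing $t'$ to be rational at intermediate levels (the induction generalizes harmlessly to real $t$), or absorbing a rounding loss into the generous $(2c+2)^{c+1}$ factor, handles the integrality issue, completing the plan.
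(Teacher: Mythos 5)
Your plan is correct, and it takes a genuinely different route from the paper's proof. The paper gives a direct, non-recursive argument: it partitions $Z$ into $c+1$ degree classes $Z_0,\dots,Z_c$ using thresholds $d_i=K_0/\bigl(c^{2i+1}(2c+2)^{i+1}t^i\bigr)$, applies the pigeonhole principle (each $v\in S$ has degree at most $c$, hence misses some class $Z_i$) to find a large $B\subseteq S$ avoiding one fixed class, takes $X=Z_0\cup\cdots\cup Z_{i-1}$ (the high-degree classes), and extracts $Q$ greedily from $B$ using the fact that vertices of $B$ have all their $G-X$ neighbors of degree less than $d_i$, so each $v\in B$ has few second neighbors. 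You instead peel off a maximum matching $M$ in the neighborhood hypergraph, dichotomize on $|M|$, and recurse on degree $c-1$ after deleting $X_1=\bigcup_{v\in M}N(v)$, with a carefully tuned parameter drift $t'=t(c+1)/c$. Both arguments are sound and land within the stated constant; your unrolled bound $c!(c+1)^c t^c$ is verified correctly, and the $\leq c^{2c}(2c+2)^{c+1}t^c$ comparison holds. The paper's single-shot argument avoids the recursion and the attendant bookkeeping (in particular, it has no need to extend the statement to real $t$, which your recursion requires, though that extension is indeed harmless), and it produces $Q$ and $X$ by one pass rather than $c$ nested ones; your inductive approach is arguably more transparent conceptually, following a standard ``extract a matching, delete its neighborhood, drop the degree'' template. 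One small point you should make explicit in a full write-up: the induction must be stated for all $S$ (and real $t\geq 1$) uniformly, since the recursive call reuses the same $S$ against the shrunken $Z\setminus X_1$; this is what lets you keep $|S|$ fixed through the recursion and compare $|Q|\geq|S|/K_0(c-1,t')$ directly against the target $|S|/K_0(c,t)$.
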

\begin{proof}
  For $0\le i\le c$, let $d_i=\dfrac{K_0}{c^{2i+1}(2c+2)^{i+1}t^i}$.
  Let $Z_0$ consist of vertices in~$Z$ of degree at least $d_0$ and for $1\le i\le c$, let $Z_i$ consist of the vertices in $Z$ of degree at least $d_i$, but less than~$d_{i-1}$.
  Since all vertices of $S$ have degree at most~$c$, for each $v\in S$ there exists $i\in\{0,\ldots, c\}$ such that $v$ has no neighbors in~$Z_i$.
  By the pigeonhole principle, there exists $i\in\{0,\ldots, c\}$ and a set $B\subseteq S$ of size at least $|S|/(c+1)$ such that no vertex of $B$ has a neigbor in $Z_i$.
  We set $X=Z_0\cup \ldots\cup Z_{i-1}$ and we choose $Q$ as an inclusion-wise maximal subset of $B$ such that no two vertices of $Q$ have a common neighbor in~$Z\setminus X$.

  First, let us estimate the size of $Q$.
  For each vertex $v\in B$, its neighbors either belong to~$X$ or have degree less than $d_i$, and thus at most $cd_i$ vertices are at distance exactly two from~$v$ in $G-X$.
  Since $Q$ is maximal, each vertex of~$B\setminus Q$ is at distance two from a vertex of $Q$ in $G-X$, and thus
  \begin{equation*}
    |Q| \ge \frac{|S|/(c+1)}{1+cd_i}
        \ge \frac{|S|/(c+1)}{2cd_i}
        \ge \frac{|S|}{2c(c+1)d_0}
          = \frac{|S|}{K_0} \enspace .
  \end{equation*}
  Note that we use the fact that $cd_i\ge cd_c=1$.

  If $i=0$, then $X=\emptyset$, and thus $Q$ and $X$ satisfy the conclusions of the lemma.
  If $i\ge 1$, then we need to estimate the size of $X$.
  Note that each vertex of $X$ has degree at least $d_{i-1}$, and thus $|X|d_{i-1}\le |E(G)|\le c|S|$.
  Consequently,
  \begin{equation*}
    \frac{|Q|}{|X|}\ge \frac{\frac{|S|}{2c(c+1)d_i}}{c|S|/d_{i-1}}=\frac{d_{i-1}}{2c^2(c+1)d_i}=t\enspace .\qquad \qedhere
  \end{equation*}
\end{proof}

For a path $P$, let $\ell(P)$ denote its length (the number of its edges).
We say that a path $P$ with directed edges is \emph{reduced} if either $\ell(P)=1$, or $\ell(P)=2$ and both of its edges are directed
away from the middle vertex of $P$.

\begin{lemma}
\label{thm:fat}
  Every class of graphs with bounded expansion is fat.
\end{lemma}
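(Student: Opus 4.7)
The plan is to reduce Lemma~\ref{thm:fat} to the bipartite extraction Lemma~\ref{thm:spec} via the oriented augmentation machinery of Section~\ref{sec:classesofgraphswithboundedexpansion}. Fix $d,t\ge 1$ and a class $\mathcal{G}$ of bounded expansion. I rely on the following classical consequence of the transitivity and fraternality rules: there is an integer $\ell=\ell(d)$ such that, for every graph $G$ and every $\ell$-th oriented augmentation $D_\ell$ of $G$, any two distinct vertices $u,v$ joined in $G$ by a path $P$ of length at most $d$ are connected by a reduced path in $D_\ell$ whose vertex set is contained in $V(P)$. The base case is the analysis of a length-$2$ walk $x$--$y$--$z$ in $D_0$: of its four possible orientations, two yield a direct edge $xz$ or $zx$ by transitivity, one yields such an edge by fraternality, and the remaining orientation $x\leftarrow y\rightarrow z$ is already a reduced path in $D_0$. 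Combined with Proposition~\ref{prop:augmexp}, this yields a constant $m=m(d)$ bounding the maximum in-degree of every such $D_\ell$ when $G$ ranges over $\mathcal{G}$.

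Given $G\in\mathcal{G}$ and a non-empty $S\subseteq V(G)$, fix an $\ell$-th oriented augmentation $D_\ell$. I form the bipartite graph $H$ with parts $S$ and (a disjoint copy of) $V(G)$, placing an edge between $v\in S$ and $z$ iff $z$ is an in-neighbor of $v$ in $D_\ell$; each $v\in S$ then has degree at most $m$ in $H$. Applying Lemma~\ref{thm:spec} with $c=m$ and an auxiliary parameter $t_1\ge t(2m+1)+1$ produces $Q_1\subseteq S$ of size at least $|S|/K_0$ and $X_1\subseteq V(G)$ of size at most $|Q_1|/t_1$ such that every vertex that is a common in-neighbor in $D_\ell$ of two distinct vertices of $Q_1$ lies in $X_1$.

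Two residual defects must be cleared: $X_1$ may meet $Q_1$, and $Q_1$ may still contain pairs joined directly by an edge of $D_\ell$. I pass first to $Q_1':=Q_1\setminus X_1$, losing only a factor $1-1/t_1$. Then, since $D_\ell$ has in-degree at most $m$, the underlying graph of $D_\ell[Q_1']$ is $2m$-degenerate, and a standard greedy argument extracts an independent set $Q\subseteq Q_1'$ of size at least $|Q_1'|/(2m+1)$. Setting $X:=X_1$ gives $X\cap Q=\emptyset$ automatically, while the choice of $t_1$ ensures $|X|\le|Q|/t$ and $|Q|\ge|S|/K$ for a constant $K=K(d,t)$.

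The scattering property then follows at once: if distinct $u,v\in Q$ admitted a path of length at most $d$ in $G-X$, the reduced-path fact applied to that path in $G$ would produce in $D_\ell$ either a direct edge between $u$ and $v$, contradicting independence of $Q$ in the underlying graph of $D_\ell$, or a common in-neighbor $z$ lying on the path and hence outside $X$, contradicting $X_1\supseteq$ all common in-neighbors of pairs in $Q_1$. The main technical obstacle is the reduced-path fact itself; while essentially folklore, its proof requires careful inductive bookkeeping through the transitivity and fraternality steps, using the fact that along a genuine path in a simple graph not all orientations of consecutive length-$2$ walks can coexist.
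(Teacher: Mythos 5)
Your proposal is correct and follows essentially the same route as the paper: oriented augmentations with the in-degree bound of Proposition~\ref{prop:augmexp}, the bipartite extraction Lemma~\ref{thm:spec}, and the reduced-path argument (your ``reduced-path fact,'' with $\ell(d)=d$, is exactly what the paper proves inline via shortest paths $P_i$ with $V(P_i)\subseteq V(P_0)$). The only deviation is a cosmetic reordering---the paper first passes to a subset of $S$ that is independent in the underlying graph of the augmentation (which automatically makes the in-neighbor set disjoint from it), whereas you apply Lemma~\ref{thm:spec} first and then discard $Q_1\cap X_1$ and extract an independent set greedily, compensating with the larger parameter $t_1\ge t(2m+1)+1$---and your bookkeeping for that variant checks out.
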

\begin{proof}
  Let $\GG$ be a class of graphs with bounded expansion and consider fixed integers $d,t\ge 1$.
  Let $\GG_d$ be the class of $d$-th oriented augmentations of graphs in~$\GG$, and let $m_d$ be the smallest integer such that the maximum in-degree of every graph in $\GG_d$ is at most $m_d$, which exists by Proposition~\ref{prop:augmexp}.
  Let $K_0=m_d^{2m_d}(2m_d+2)^{m_d+1}t^m_d$ and let $K=(2m_d+1)K_0$.
  We will show that every $G\in \GG$ is $(d,t,K)$-fat.

  Consider a set $S\subseteq V(G)$.
  Let $G_d$ be a $d$-th oriented augmentation of $G$ and let~$G_d'$ be the underlying undirected graph of $G_d$.
  Since $G_d$ has maximum in-degree at most $m_d$, it follows that the maximum average degree of $G_d'$ is at most~$2m_d$, and thus $G_d'$ has a proper coloring by at most $2m_d+1$ colors.
  By considering the intersections of the color classes of this coloring with $S$, we conclude that there exists a set $S_0\subseteq S$ of size at least $|S|/(2m_d+1)$ which is independent in~$G_d'$.
  Let $Z$ be the set of in-neighbors of vertices of~$S_0$ in~$G_d$.
  Let $H$ be the bipartite graph with parts~$S_0$ and $Z$ such that $sz$ is an edge of~$H$ if and only if $s\in S_0$, $z\in Z$ and $G_d$ contains an edge directed from $z$ to~$s$.
  Let $Q$ and $X$ be the sets obtained by applying Lemma~\ref{thm:spec} to $H$.
  Note that $|Q|\ge |S_0|/K_0\ge |S|/K$ and $|X|\le |Q|/t$ as required.

  It remains to show that $Q$ is $d$-scattered in $G-X$.
  Suppose that there exists a path $P_0\subseteq G-X$ of length at most $d$ between two vertices $u,v\in Q$.
  For $1\le i\le d-1$, let $G_i$ denote the intermediate $i$-th oriented augmentation of $G$ obtained during the construction of $G_d$.
  For $1\le i\le d$, let $P_i$ be a path between~$u$ and $v$ in the underlying undirected graph of $G_i$ such that $V(P_i)\subseteq V(P_0)$ and~$P_i$ is as short as possible.
  Note that $\ell(P_i)\le \ell(P_{i-1})$, and if $\ell(P_i)=\ell(P_{i-1})$, then $P_i$ (taken with the orientation of its edges as in $G_i$) is reduced.
  Since the length of $P_0$ is at most $d$, we conclude that $P_d$ with the orientation as in $G_d$ is reduced.
  Since~$Q$ is an independent set in $G_d'$, it follows that $\ell(P_d)\neq 1$.
  Therefore, $\ell(P_d)=2$ and the middle vertex $x$ of $P_d$ is a common in-neighbor of $u$ and $v$ in $G_d$.
  However, this implies that $x$ belongs to $X$, contrary to the assumption that $P$ is disjoint with $X$.
\end{proof}

\section{Colorings and Independent Sets}
\label{sec:coloringsandindependentsets}
Let us now turn our attention back to independent sets.
As we mentioned before, if $G$ is a $3$-colorable graph on $n$ vertices, then $G$ has an independent set of size at least $n/3$.
This bound can be improved when some vertices in the coloring have monochromatic neighborhood, since such vertices can be moved to two different color classes.

\begin{lemma}
\label{lemma:large}
  Let $G$ be a graph on $n$ vertices and let $Q,X\subseteq V(G)$ be disjoint sets.
  If $Q$ is an independent set of $G$ and $G - X$ has a $3$-coloring $\varphi$ such that the neighborhood of each vertex in $Q$ is monochromatic, then $\alpha(G)\ge \frac{n-|X|+|Q|}{3}$.
\end{lemma}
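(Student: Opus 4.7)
The plan is to exploit the monochromatic-neighborhood condition by showing that each vertex of $Q$ can be independently assigned to either of two color classes without destroying properness, and then averaging over the three colors to produce an enlarged independent set.

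Concretely, for each $q\in Q$, the hypothesis says that all neighbors of $q$ in $G-X$ receive a common color under $\varphi$; call this color $c(q)\in\{1,2,3\}$ (if $q$ has no neighbors in $G-X$, pick $c(q)$ arbitrarily). Since $\varphi$ is proper, any color in $\{1,2,3\}\setminus\{c(q)\}$ would be a valid color for $q$. Writing $A_1,A_2,A_3$ for the color classes of $\varphi$ in $G-X$, I would define, for each $i\in\{1,2,3\}$,
\begin{equation*}
  A_i' \;:=\; (A_i\setminus Q)\,\cup\,\{q\in Q : c(q)\neq i\},
\end{equation*}
i.e.\ I move into class $i$ every vertex of $Q$ whose (monochromatic) neighborhood is not colored $i$.

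The key step is then to verify that $A_i'$ is an independent set in $G$. There are no edges inside $A_i\setminus Q$ because $A_i$ is a color class of $\varphi$; there are no edges inside $\{q\in Q:c(q)\neq i\}$ because $Q$ is independent by hypothesis; and an edge from $v\in A_i\setminus Q$ to $q\in Q$ with $c(q)\neq i$ would force $\varphi(v)=c(q)\neq i$, contradicting $v\in A_i$. This verification is the only non-routine point, and it is exactly where the two hypotheses (independence of $Q$ and monochromaticity of its neighborhoods) combine.

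Finally I would finish by averaging: every vertex of $V(G)\setminus(X\cup Q)$ lies in exactly one $A_i\setminus Q$ and thus contributes $1$ to $\sum_i|A_i'|$, while every $q\in Q$ satisfies $c(q)\neq i$ for exactly two values of $i$ and thus contributes $2$. Hence
\begin{equation*}
  \sum_{i=1}^{3}|A_i'| \;=\; (n-|X|-|Q|)+2|Q| \;=\; n-|X|+|Q|,
\end{equation*}
so some $A_i'$ has size at least $\tfrac{n-|X|+|Q|}{3}$, giving the claimed bound on $\alpha(G)$. I do not anticipate any real obstacle: the argument is a short averaging, and the whole content is in choosing the right three candidate independent sets.
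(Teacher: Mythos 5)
Your proof is correct and follows essentially the same route as the paper: define three candidate independent sets by placing each $q\in Q$ into the two color classes different from the color of its neighborhood, verify independence, and average. The only (immaterial) difference is that the paper puts isolated vertices of $Q$ into all three sets simultaneously, gaining an extra $|Q_0|$ in the count, whereas you assign them arbitrarily to two of the three; both yield the stated bound.
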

\begin{proof}
  For $i\in\{1,2,3\}$, let $Q_i$ be the set of vertices of $Q$ such that $\varphi$ assigns the color $i$ to their neighbors, and let $Q_0$ be the set of vertices of $Q$ that are isolated in $G-X$.
  Hence, $Q$ is the disjoint union of $Q_0$, $Q_1$, $Q_2$ and $Q_3$.
  Let $V_i$ be the set of vertices of $G-X-Q$ to which~$\varphi$ assigns the color $i$.
  Let $S_i=V_i\cup Q_0\cup Q_{i+1}\cup Q_{i+2}$, where $Q_4=Q_1$ and $Q_5=Q_2$, and note that~$S_i$ is an independent set in $G$.
  We have $|S_1|+|S_2|+|S_3| = (|V_1|+|V_2|+|V_3|)+2(|Q_1|+|Q_2|+|Q_3|)+3|Q_0|=(n-|X|-|Q|)+2|Q|+|Q_0|\ge n-|X|+|Q|$.
  Therefore, at least one of $S_1$, $S_2$ and $S_3$ has size at least $\frac{n-|X|+|Q|}{3}$.
\end{proof}

For a plane graph $G$, let $F(G)$ denote the set of faces of~$G$.
Consider a cycle $C\subset G$.
Removing~$C$ splits the plane into two open connected subsets, the bounded one is called the \emph{open interior} of $C$. 
The \emph{closed interior} of $C$ is the closure of the open interior of $C$.
The cycle $C$ is \emph{separating} if both the open interior of $C$ and the complement of the closed interior of $C$ contain a vertex of~$G$.
The following result of Dvo\v{r}\'ak, Kr\'al' and Thomas~\cite{DvorakKralThomas2013} is our main tool for obtaining colorings with monochromatic neighborhoods.

\begin{proposition}[\cite{DvorakKralThomas2013}]
\label{thm:3coloringextension}
  There exists an integer $D_0\ge 0$ such that for any triangle-free plane graph~$G$ without separating $4$-cycles, for any sets $Q_1\subseteq V(G)$ of vertices of degree at most $4$ and $Q_2\subseteq F(G)$ of $4$-faces such that the elements of $Q_1\cup Q_2$ have pairwise distance at least $D_0$, and for any $3$-coloring $\psi$ of the boundaries of the faces in $Q_2$, there exists a 3-coloring $\varphi$ of $G$ such that the neighborhood of each vertex in~$Q_1$ is monochromatic and the pattern of the coloring on each face in $Q_2$ is the same as in the coloring~$\psi$.
\end{proposition}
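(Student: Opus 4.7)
The plan is to reduce Proposition~\ref{thm:3coloringextension} to a Grötzsch-type precoloring-extension theorem for triangle-free plane graphs without separating $4$-cycles, using disjoint local surgeries to encode the two kinds of constraints. First, for each $f \in Q_2$, the prescribed pattern $\psi|_{\partial f}$ can be recorded either as a precoloring of the (at most four) vertices of $\partial f$ or, equivalently, by gluing a small triangle-free planar gadget $H_f$ across $f$ whose $3$-colorings restrict on its four interface vertices to exactly the patterns compatible with $\psi|_{\partial f}$. Second, for each $v \in Q_1$ of degree $k \le 4$, I would delete a bounded-radius ball $B_v$ around $v$ and replace it by a triangle-free planar gadget $H_v$ whose $k$ interface vertices are identified with the $k$ vertices of $\partial B_v$ corresponding to the neighbors of $v$, and in which every $3$-coloring forces those interface vertices to receive the same color.

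After these local surgeries, the resulting plane graph $G^{\ast}$ must be triangle-free, have no separating $4$-cycle, and admit a $3$-coloring extending the (now purely precoloring-based) constraints. The distance hypothesis, i.e.\ that elements of $Q_1 \cup Q_2$ are pairwise at distance at least $D_0$, is used here to ensure that the gadgets embed in pairwise disjoint disks, so that no new triangles or separating $4$-cycles can arise between distinct surgeries, provided $D_0$ exceeds twice the diameter of the largest gadget. A $3$-coloring of $G^{\ast}$ produced by the extension theorem then pulls back to the desired coloring $\varphi$ of $G$ in the obvious way: the $Q_2$-gadgets enforce $\psi$ on the precolored $4$-faces, and the $Q_1$-gadgets enforce that $N(v)$ is monochromatic for every $v \in Q_1$.

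The main obstacle is constructing the monochromatic-neighborhood gadget $H_v$. A naive attempt—delete $v$ and identify its $k \le 4$ neighbors into a single vertex—can create a triangle whenever two neighbors of $v$ share a common neighbor other than $v$, and can similarly create separating $4$-cycles. A robust construction therefore uses a slightly larger ball $B_v$ and a planar gadget built from even-length cycles and paths that propagate equality of color between the interface vertices while remaining triangle-free and without separating $4$-cycles. Verifying that every $3$-coloring of $H_v$ restricts to a monochromatic pattern on its $k$ interface vertices, and that $H_v$ glues compatibly into the plane embedding of $G - B_v$, is where I expect the bulk of the technical work; this is also where the degree bound $k \le 4$ becomes essential, since it keeps the interface small enough to admit a bounded-size planar gadget with the required forcing behaviour, and matches the configurations handled by the Dvo\v{r}\'ak–Kr\'al'–Thomas precoloring-extension machinery.
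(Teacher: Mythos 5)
This proposition is not proved in the paper at all; it is cited verbatim from Dvo\v{r}\'ak, Kr\'al' and Thomas~\cite{DvorakKralThomas2013}, so there is no in-paper argument to compare against. That said, your proposal is not a proof: the core step is missing and the reduction you describe is essentially circular. You propose to replace the two kinds of constraints (monochromatic neighborhoods at $Q_1$, prescribed patterns on $4$-faces in $Q_2$) by planar triangle-free gadgets, and then invoke ``a Gr\"otzsch-type precoloring-extension theorem for triangle-free plane graphs without separating $4$-cycles.'' But no such general-purpose extension theorem exists as a separately available tool: the assertion that many small precoloring constraints, placed pairwise far apart in a triangle-free plane graph without separating $4$-cycles, can be satisfied simultaneously \emph{is} the content of the Dvo\v{r}\'ak--Kr\'al'--Thomas result. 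Gr\"otzsch's theorem itself gives no precoloring, and the Gimbel--Thomassen result (Proposition~\ref{thm:gimbel}) handles only a \emph{single} precolored short cycle. Converting the $Q_1$/$Q_2$ constraints into another format of local constraint leaves the entire difficulty---combining many far-apart local constraints---untouched; the proof in \cite{DvorakKralThomas2013} is a long structural argument about $4$-critical triangle-free plane graphs, not a reduction of this kind.

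There are also concrete problems with the gadget step. You delete a bounded-radius ball $B_v$ around each $v\in Q_1$ and replace it by $H_v$, but then a $3$-coloring of $G^\ast$ carries no information about how to color the interior of $B_v$ in $G$, so it does not ``pull back'' to a $3$-coloring of $G$ without further argument. More fundamentally, the existence of a triangle-free, planar, separating-$4$-cycle-free gadget with $k\le 4$ interface vertices that are forced to receive the \emph{same} color in every $3$-coloring is by no means clear; naive constructions (identifying neighbors, even-cycle/path ladders) either create forbidden short cycles or fail to force equality, and you acknowledge this without resolving it. So even if a suitable extension theorem were in hand, the surgery step would need a genuine construction and verification, and as written the proposal contains neither the key extension lemma nor a correct gadget.
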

In the statement, two colorings have the same \emph{pattern} on a subgraph $F$ if they differ on $F$ only by a permutation of colors.

We use the following result by Gimbel and Thomassen~\cite{GimbelThomassen1997}.
\begin{proposition}[\cite{GimbelThomassen1997}]
\label{thm:gimbel}
  Let $G$ be a triangle-free planar graph and let $C=v_1v_2\ldots $ be an induced cycle of length at most $6$ in $G$.
  If a $3$-coloring $\psi$ of $C$ does not extend to a $3$-coloring of $G$, then $|C|=6$ and $\psi(v_1)=\psi(v_4)\neq \psi(v_2)=\psi(v_5)\neq \psi(v_3)=\psi(v_6)\neq\psi(v_1)$.
\end{proposition}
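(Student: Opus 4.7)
The plan is to proceed by induction on $|V(G)|$, taking a minimum counterexample $(G,C,\psi)$. I would first dispose of the easy ranges of $|C|$. The case $|C|=3$ is vacuous since an induced triangle would contradict triangle-freeness of $G$. For $|C|\in\{4,5\}$, I would show that $\psi$ always extends: up to permutation of colors there are only a handful of patterns on $C$, and for each one a short identification argument (contracting two non-adjacent vertices of $C$ that receive the same color, verifying that the quotient graph remains triangle-free and planar thanks to $C$ being induced) reduces to Gr\"otzsch's theorem on a smaller graph. This leaves the substantive case $|C|=6$ with $\psi$ not of the alternating pattern $\psi(v_1)=\psi(v_4)\neq\psi(v_2)=\psi(v_5)\neq\psi(v_3)=\psi(v_6)\neq\psi(v_1)$.

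Next, I would collect a list of reducible configurations. (i) If some $v\in V(G)\setminus V(C)$ has degree at most $2$, I delete $v$, extend the coloring by induction on $G-v$ (with the same boundary $C$ and precoloring $\psi$), and then color $v$ greedily from its at most two colored neighbors. (ii) If $G$ has a separating cycle $C'$ of length at most $5$ other than $C$, I cut $G$ along $C'$ and apply induction to both sides; to make this reduction clean I would actually strengthen the inductive hypothesis to handle triangle-free plane graphs with two precolored boundary cycles. (iii) If a vertex of $C$ has an interior neighbor of degree $3$ lying on a short interior cycle, a similar local identification reduces to a smaller instance. Once all these reductions are excluded, the minimum counterexample has no separating cycle of length at most $5$, no interior vertex of degree less than $3$, and every bounded face is of length at least $4$ (triangle-freeness forbids shorter).

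With this structure in hand, I would run a discharging argument on the disc bounded by $C$. Assign each vertex the charge $d(v)-4$ and each face the charge $\ell(f)-4$; by Euler's formula the total charge equals a small constant depending only on $|C|$. Using the girth and minimum-degree bounds from the irreducibility, I would redistribute charge from faces of length $\geq 5$ and from the outer face $C$ to absorb the negative charge at $4$-faces and at interior $3$-vertices, and show that the resulting sum is strictly positive---a contradiction.

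The main obstacle is calibrating the discharging so that the contradiction is obtained for every residual irreducible configuration \emph{except} a $6$-cycle carrying the alternating pattern, since that configuration genuinely fails to extend (for instance, the alternating pattern on the outer $6$-face of the Gr\"otzsch-type graphs appearing in Proposition~\ref{thm:3coloringextension} cannot be extended). Equivalently, the delicate step is to enumerate precisely the near-extremal plane structures attached to an induced $6$-cycle which might obstruct extension, and to verify by hand that each such obstruction either fails triangle-freeness or forces $\psi$ to be the alternating pattern. Propagating this structural classification correctly through the inductive and clique-sum style reductions is what makes the theorem technically demanding.
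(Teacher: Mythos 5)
The paper does not prove this statement at all: it is imported verbatim from Gimbel and Thomassen~\cite{GimbelThomassen1997}, and it is a theorem of essentially the same depth as Gr\"otzsch's theorem itself, which is why the authors cite it rather than reprove it. Measured against that, your proposal is a proof plan rather than a proof, and the gap sits exactly where the theorem is hard. The cases $|C|\le 5$ are indeed routine (a $4$- or $5$-cycle in a triangle-free graph is automatically induced, and identification of two equally colored non-adjacent boundary vertices reduces to Gr\"otzsch), but the whole content of the proposition is the case $|C|=6$ with a non-alternating precoloring. Your discharging setup with charges $d(v)-4$ and $\ell(f)-4$ sees only the graph, not the precoloring $\psi$: no calibration of vertex/face charges can distinguish the alternating pattern (which genuinely does not extend, e.g.\ on the graphs underlying Proposition~\ref{thm:3coloringextension}) from the non-alternating ones on the very same graph. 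Any correct argument must therefore at some point perform a coloring-specific reduction on the boundary (identifying boundary vertices of equal color, cutting the hexagon by a monochromatic chord or short path, etc.), and this is precisely the step you defer with ``calibrating the discharging'' and ``enumerate precisely the near-extremal plane structures \dots{} and verify by hand.'' That enumeration and verification \emph{is} the theorem; leaving it open means the proposal does not establish the statement.

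A secondary problem: in your reduction (ii) you propose to strengthen the induction hypothesis to triangle-free plane graphs with \emph{two} precolored boundary cycles. That strengthened statement is false without strong distance hypotheses between the precolored cycles --- this is exactly why Proposition~\ref{thm:3coloringextension} demands that the precolored elements be pairwise at distance at least $D_0$. Fortunately the reduction does not need it: if $C'$ is a separating cycle of length at most $5$, first extend $\psi$ by induction to the part of $G$ outside $C'$, and then extend the resulting coloring of $C'$ into its interior by a second application of induction (possible since the exceptional pattern requires length $6$). But as written, the hypothesis you invoke is unavailable, and repairing it is another place where the proposal, taken literally, would break.
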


\begin{corollary}
\label{thm:mono3}
  Let $G$ be a triangle-free planar graph.
  For any vertex $v\in V(G)$ of degree at most~$3$, there exists a $3$-coloring of $G$ such that the neighborhood of~$v$ is monochromatic.
\end{corollary}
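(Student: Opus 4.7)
The plan is to reduce the corollary to a direct application of Proposition~\ref{thm:gimbel} through a small planar gadget. If $\deg(v)\le 1$ there is nothing to do: Grötzsch's theorem gives a $3$-coloring of $G$, and a neighborhood of size at most one is automatically monochromatic. So I restrict attention to $\deg(v)\in\{2,3\}$ and write $N(v)=\{u_1,\ldots,u_d\}$ with $d=\deg(v)$. Because $G$ is triangle-free and each pair $u_i,u_j$ shares the common neighbor $v$, the vertices $u_1,\ldots,u_d$ are pairwise non-adjacent.

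The key construction would be to form a new plane graph $G^*$ from $G$ by adding $d$ fresh vertices $x_1,\ldots,x_d$ together with the edges of the cycle
\[
  C \,=\, u_1\,x_1\,u_2\,x_2\,\cdots\,u_d\,x_d\,u_1
\]
of length $2d\in\{4,6\}$. Drawing each $x_i$ inside the angular sector at $v$ between the edges $v u_i$ and $v u_{i+1}$ (indices modulo $d$) keeps $G^*$ planar. Each $x_i$ has only two neighbors in $G^*$, both among the pairwise non-adjacent vertices $u_1,\ldots,u_d$, so $G^*$ remains triangle-free; the same observation shows that any would-be chord of $C$ would have to join two $u_i$'s or two $x_j$'s, so $C$ is induced in $G^*$.

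I would then precolor $C$ by $\psi(u_i)=1$ and $\psi(x_j)=2$, which is a proper $2$-coloring of the even cycle $C$. If $d=2$, the cycle has length $4$ and Proposition~\ref{thm:gimbel} guarantees unconditionally that $\psi$ extends to a proper $3$-coloring $\varphi$ of $G^*$. If $d=3$, the cycle has length $6$, but the exceptional pattern of Proposition~\ref{thm:gimbel} requires three distinct colors around $C$, whereas $\psi$ uses only two; so the bad case is avoided and $\psi$ again extends to a $3$-coloring $\varphi$ of $G^*$. Restricting $\varphi$ to $V(G)$ yields a proper $3$-coloring of $G$ in which every neighbor of $v$ has color~$1$, as required.

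The only items that need checking are purely combinatorial: that the augmentation preserves planarity and triangle-freeness, that $C$ is induced, and that the two-color precoloring of $C$ cannot coincide with the forbidden six-cycle pattern of Proposition~\ref{thm:gimbel}. I do not anticipate any real obstacle beyond these routine verifications; the hypothesis $\deg(v)\le 3$ is used precisely to keep $|C|\le 6$ so that Proposition~\ref{thm:gimbel} applies.
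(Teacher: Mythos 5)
Your proof is correct and takes essentially the same approach as the paper: build an even cycle of length $2\deg(v)\in\{4,6\}$ through the neighbors of $v$ via new degree-two vertices, precolor it with two colors, and invoke the Gimbel--Thomassen extension theorem, noting that the forbidden length-$6$ pattern uses three colors. The only cosmetic difference is that the paper deletes $v$ before inserting the cycle and reinstates it at the end, whereas you keep $v$ inside the cycle throughout; both are valid, and you are slightly more explicit about why the exceptional case of Proposition~\ref{thm:gimbel} cannot occur.
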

\begin{proof}
  Let $v_1$, \ldots, $v_t$ be the neighbors of $v$ in the cyclic order around $v$.
  The claim holds by Gr\"otzsch' theorem if $t\le 1$, and thus assume that $2\le t\le 3$.
  Let $G'$ be the graph obtained from $G$ by removing~$v$ and adding new vertices $u_1$, \ldots, $u_t$ and the edges of the cycle $C=v_1u_1v_2u_2\ldots v_tu_tv_1$.
  Note that $C$ is an induced cycle, since $G$ is triangle-free.
  Let $\psi(v_1)=\ldots=\psi(v_t)=1$ and $\psi(u_1)=\ldots=\psi(u_t)=2$.
  By Proposition~\ref{thm:gimbel}, $\psi$ extends to a $3$-coloring $\varphi$ of $G'$.
  By setting $\varphi(v)=2$, we extend $\varphi$ to a $3$-coloring of $G$ such that the neighborhood of $v$ is monochromatic.
\end{proof}

We need a variation of Proposition~\ref{thm:3coloringextension} which allows some separating $4$-cycles.
Let us recall that a subgraph $G_0$ of a plane graph $G$ is \emph{$4$-swept} if $G_0$ has no separating $4$-cycles and every face of $G_0$ which is not a face of $G$ has length~$4$.

\begin{lemma}
\label{thm:mono}
  There exists an integer $D_1\ge 1$ such that the following holds for any triangle-free plane graph~$G$ and any $4$-swept subgraph $G_0$ of $G$.
  Let $X,Q\subseteq V(G_0)$ be disjoint sets such that each vertex of~$Q$ has degree at most $4$ in $G_0$.
  If $Q$ is $D_1$-scattered in $G_0-X$, then~$G-X$ has a $3$-coloring such that at least $|Q|-6|X|$ vertices have monochromatic neighborhood and form an independent set.
\end{lemma}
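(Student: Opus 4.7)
The plan is to apply Proposition~\ref{thm:3coloringextension} to $G_0$ with a carefully chosen subset $Q_1 \subseteq Q$ and $Q_2 = \emptyset$, obtaining a $3$-coloring $\varphi_0$ of $G_0$ in which $N_{G_0}(v)$ is monochromatic for every $v \in Q_1$, and then to extend $\varphi_0$ to a $3$-coloring $\varphi$ of $G$ by separately colouring each $4$-face interior of $G_0$ via Proposition~\ref{thm:gimbel}. I take $D_1$ sufficiently large relative to $D_0$ for the counting below to close.

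The first and most delicate step is to extract $Q_1 \subseteq Q$ of size at least $|Q|-6|X|$ that is $D_0$-scattered in $G_0$, not merely in $G_0 - X$. The obstruction is that $X$-vertices can act as shortcuts: two vertices $u,v \in Q$ with $d_{G_0 - X}(u,v) > D_1$ may have $d_{G_0}(u,v) \leq D_0$ via a short $G_0$-path through $X$. I would define $Q_1$ by discarding every $v \in Q$ that lies within $G_0$-distance $D_0$ of $X$. For any two surviving $u,v \in Q_1$, any $G_0$-path of length at most $D_0$ between them avoids $X$ (since both endpoints are $G_0$-far from $X$), hence witnesses $d_{G_0-X}(u,v) \leq D_0 < D_1$, a contradiction. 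The bound $|Q \setminus Q_1| \leq 6|X|$ is the main combinatorial point: the $Q$-vertices within $G_0$-distance $D_0$ of a single $x \in X$ are pairwise $D_1$-far in $G_0 - X$ and have degree at most $4$ in $G_0$; combining a planarity/Euler-style estimate around $x$ with the degree bound and a sufficiently large $D_1$ yields at most $6$ such $Q$-vertices per $x$.

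Given $Q_1$, apply Proposition~\ref{thm:3coloringextension} to $G_0$ (triangle-free plane without separating $4$-cycles by $4$-sweptness) with $Q_1$ and $Q_2 = \emptyset$ to obtain $\varphi_0$. Extend $\varphi_0$ to $\varphi$ on $G$ face-by-face: for each $4$-face $f$ of $G_0$ not in $F(G)$, the subgraph $H_f$ of $G$ contained in the closed disk bounded by $\partial f$ is triangle-free planar with an induced $4$-cycle outer boundary, so Proposition~\ref{thm:gimbel} extends $\varphi_0|_{\partial f}$ to $H_f$. When a $Q_1$-vertex $v$ lies on $\partial f$, the two $\partial f$-neighbours of $v$ share a colour $\alpha$ by monochromaticity, and I would choose the extension so that every extra $v$-neighbour inside $f$ also receives $\alpha$: identify all such neighbours with a boundary $\alpha$-neighbour of $v$ in $H_f - v$ and re-invoke Proposition~\ref{thm:gimbel}, using triangle-freeness of $G$ to ensure that no two extra $v$-neighbours are mutually adjacent. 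Restricting $\varphi$ to $G - X$ then yields the desired $3$-colouring with monochromatic $N_{G-X}(v)$ for every $v \in Q_1$, and $Q_1$ is independent in $G - X$ because $G \setminus G_0$ adds no edges between vertices of $V(G_0)$ (any such edge would be a diagonal of a $4$-face of $G_0$, forbidden by triangle-freeness).

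The main obstacle is the counting estimate $|Q \setminus Q_1| \leq 6|X|$. A bare-hands ball argument fails, because a single high-degree $x \in X$ can a priori cluster many $Q$-vertices within $G_0$-distance $D_0$, and the $D_1$-scatter of $Q$ in $G_0 - X$ constrains these vertices only after $x$ is deleted. The bound therefore really requires combining the degree-$\leq 4$ hypothesis on $Q$-vertices with planarity of $G_0$ and a tuning of $D_1$, so that any cluster of close $Q$-vertices around a single $X$-vertex has size bounded by the absolute constant $6$. A secondary technical point is the local extension inside each $4$-face: the identifications must not introduce triangles, which is handled by triangle-freeness of $G$ (two $v$-neighbours inside a face are never adjacent).
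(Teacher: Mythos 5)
Your high-level plan (apply Proposition~\ref{thm:3coloringextension} to a suitable subset, then extend into the $4$-face interiors via Proposition~\ref{thm:gimbel}) is on the right track, but the central counting claim is simply false, and this is not a fixable technicality. You define $Q_1$ by discarding every $v\in Q$ within $G_0$-distance $D_0$ of $X$ and assert $|Q\setminus Q_1|\le 6|X|$. Consider $G=G_0$ a star $K_{1,d}$ with centre $x$ and leaves $y_1,\dots,y_d$, with $X=\{x\}$ and $Q=\{y_1,\dots,y_d\}$. Then $G_0$ is a $4$-swept subgraph of $G$ (no cycles, the only face is a face of $G$), every $Q$-vertex has degree $1\le 4$, and $Q$ is $D_1$-scattered in $G_0-X$ for any $D_1$ (all distances are infinite). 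Yet every $y_i$ is at distance $1$ from $X$, so $Q_1=\emptyset$, while $|Q|-6|X|=d-6$ can be arbitrarily large. Increasing $D_1$ does not help: the scattering hypothesis on $G_0-X$ places no constraint whatsoever on $G_0$-distances through $X$, and the degree bound on $Q$ is orthogonal to the (unbounded) degree of $x$. Your own closing paragraph correctly identifies that high-degree $X$-vertices are the danger, but the proposed remedy (planarity of $G_0$ plus tuning $D_1$) cannot close the gap, because the example above is planar and works for every $D_1$.

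The paper's proof avoids this trap by never asking for a subset of $Q$ that is $D_0$-scattered in $G_0$. Instead it works in a modified copy $G_1$ of $G_0-X$ (with short paths patched back across the deleted $X$-portions of relevant face boundaries) so that the given $D_1$-scattering in $G_0-X$ translates directly into the $D_0$-separation needed for Proposition~\ref{thm:3coloringextension}. The only $Q$-vertices genuinely lost are those $z$ lying on a $4$-face $x_1zx_2v$ with \emph{both} face-neighbours $x_1,x_2\in X$; the number of such $z$ is bounded by $6|X|$ via an auxiliary simple planar graph $H$ on $X$ (one edge per such face), using $|E(H)|\le 3|X|$. Crucially, for the other $Q$-vertices that touch a $4$-face of $G_0$ which is not a face of $G$, the paper does not attempt to keep them in the independent set: it substitutes a low-degree vertex $v_f$ in the interior of that face, pre-colours the interior so that $N(v_f)$ is monochromatic (Corollary~\ref{thm:mono3}), and passes the face pattern as a constraint ($Q_2$) to Proposition~\ref{thm:3coloringextension}. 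This substitution is the idea your proposal is missing; without it (or an equivalent device) the loss cannot be charged to $|X|$. Your secondary step—forcing interior $v$-neighbours to share the colour $\alpha$ via identification inside each $4$-face—also needs care (identifications can create triangles through common second neighbours), but that issue is moot given the failure of the counting step.
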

\begin{proof}
  Let $D_1=D_0+4$, where $D_0$ is the constant of Proposition~\ref{thm:3coloringextension}.

  Let $R$ be the set of $4$-faces of $G_0$ which are not faces of $G$.
  Let $Z$ be the set of vertices $z\in Q$ such that there exists a face $x_1zx_2v\in R$ such that $x_1$ and $x_2$ belong to $X$.
  Let~$H$ be the graph (possibly with parallel edges) with vertex set $X$ such that two vertices $x_1$ and $x_2$ are adjacent if there exists a face $x_1v_1x_2v_2\in R$, for some $v_1,v_2\in V(G_0)$.
  Since~$G_0$ has no separating $4$-cycles, we conclude that either $G_0$ is isomorphic to~$K_{2,3}$ and $|X|\ge 2$, or $H$ has no parallel edges.  Since $H$ is a planar graph, it follows that $|E(H)|\le 3|X|$.
  Note that $|Z|\le 2|E(H)|\le 6|X|$.

  Let $Q_0=Q\setminus Z$.
  Let $Q_1$ be the set of vertices of $Q_0$ that are not incident with the faces of $R$.
  For each vertex $v\in Q_0\setminus Q_1$, we choose one incident $4$-face in $R$; let $Q'_2$ be the set of these faces.
  By the choice of $Z$, each face in $Q'_2$ is incident with exactly one vertex of $Q_0\setminus Q_1$,
  and thus $|Q'_2|=|Q_0\setminus Q_1|$.
  For each $f\in Q'_2$, let $G_f$ be the subgraph of $G$ drawn in the closure of $f$.
  By Euler's formula, there exists a vertex $v_f\in V(G_f)\setminus V(f)$ whose degree in $G$ is at most 3.
  Let $\psi_f$ be a $3$-coloring of $G_f$ such that the neighborhood of $v_f$ is monochromatic, which exists by Corollary~\ref{thm:mono3}.
  Let $I=\{v_f:f\in Q'_2\}$.

  Let $G_1$ be the graph obtained from $G_0-X$ as follows.
  For each face $f\in Q'_2$ whose boundary intersects~$X$, note that by the choice of $Z$, there exists a subpath~$P$ of the boundary walk of $f$ such that $X\cap V(F)$ are exactly the internal vertices of $P$.
  Add to $G_1$ a path of length $|P|$, with the same endvertices as $P$ and with new internal vertices of degree two.

  Note that each face in $Q'_2$ corresponds to a $4$-face of $G_1$; let $Q_2$ be the set of such faces of $G_1$.
  Observe that the distance in $G_1$ between any two elements of $Q_1\cup Q_2$ is at least $D_0$.
  Furthermore, for each $f\in Q'_2$, we can naturally interpret~$\psi_f$ as a coloring of the corresponding face of $Q_2$.

  By Proposition~\ref{thm:3coloringextension}, there is a 3-coloring $\varphi_0$ of $G_1$ such that the neighborhood of every vertex of~$Q_1$ is monochromatic and the pattern of $\varphi_0$ on every face $f\in Q_2$ is the same as the pattern of~$\psi_f$.
  By permuting the colors in the colorings $\psi_f:f\in Q_2$, we can assume that their restrictions to the boundaries of the faces in $Q_2$ are equal to the restriction of $\varphi_0$.
  For each $4$-face $f\in R\setminus Q_2$, let $\psi_f$ be a $3$-coloring of the subgraph of $G$ drawn in the closure of $f$ matching $\varphi_0$ on $f$, which exists by Proposition~\ref{thm:gimbel}.

  Let $\varphi$ be the union of $\varphi_0$ and the colorings $\psi_f$ for all $4$-faces $f$ of $G_0$, restricted to $V(G)\setminus X$.
  Note that $\varphi$ is a $3$-coloring of $G-X$ such that all vertices in $Q_1\cup I$ have monochromatic neighborhood in $\varphi$.
  The claim of this lemma holds, since $|Q_1\cup I|=|Q_0|$.
\end{proof}

\section{Proofs}
\label{sec:proofs}
It is well known that planar triangle-free graphs have many vertices of degree at most $4$.
\begin{lemma}
\label{thm:deg4}
  Any $n$-vertex planar triangle-free graph $G$ has at least $n/5$ vertices of degree at most $4$.
\end{lemma}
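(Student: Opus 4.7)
The plan is a straightforward application of Euler's formula together with a degree sum count. Since $G$ is planar and triangle-free, every face of a plane embedding has length at least~$4$, so the edge-face incidence count yields $2|E(G)| \geq 4|F(G)|$. Combining this with Euler's formula (passing to a connected supergraph on the same vertex set if necessary, which can only increase the number of edges) gives the standard bound $|E(G)| \leq 2n - 4$, and hence the degree sum satisfies $\sum_{v\in V(G)} \deg_G(v) = 2|E(G)| \leq 4n - 8$.

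Next I would let $a$ denote the number of vertices of degree at most~$4$, and observe that the remaining $n-a$ vertices each contribute at least~$5$ to the degree sum. Thus
\begin{equation*}
5(n-a) \leq \sum_{v\in V(G)} \deg_G(v) \leq 4n - 8,
\end{equation*}
which rearranges to $a \geq (n+8)/5 \geq n/5$, as desired.

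The only mild subtlety is handling disconnectedness and isolated vertices, which I would dispatch by noting that if $G$ is disconnected one can add edges between components without creating triangles (or simply apply Euler's formula componentwise); either way the inequality $|E(G)| \leq 2n-4$ (or $|E(G)| \leq 2n$, which is more than sufficient) still holds. There is no real obstacle here; the argument is a standard consequence of planarity plus the girth-$4$ condition.
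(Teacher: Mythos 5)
Your proof is correct and follows essentially the same route as the paper: Euler's formula plus triangle-freeness bounds the average degree by~$4$ (the paper states this directly, you derive it via $|E(G)|\le 2n-4$), and counting the vertices of degree at least~$5$ gives the bound. The extra care about disconnectedness and small $n$ is fine but not needed beyond the crude bound you already note.
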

\begin{proof}
  Let $n_4$ denote the number of vertices of $G$ of degree at most $4$. 
  By Euler's formula, planar triangle-free graphs have average degree at most $4$, and thus $5(n-n_4)\le 4n$.
  The claim of the lemma follows.
\end{proof}

Let us recall that $s(G)$ denotes the maximum number of vertices of a $4$-swept subgraph of a plane graph $G$.
We can now state the result from which Theorems~\ref{thm:indsize} and~\ref{thm:large-ind} will be derived.
\begin{theorem}
\label{thm:gen}
  There exists a constant $c > 0$ such that every plane triangle-free graph $G$ on $n$ vertices has an independent set of size at least $\frac{n+cs(G)}{3}$.
\end{theorem}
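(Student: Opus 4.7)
\medskip
\noindent\textbf{Proof proposal for Theorem~\ref{thm:gen}.}
The plan is to combine the three main tools developed in the paper: the vertex-count bound of Lemma~\ref{thm:deg4} applied to a maximum $4$-swept subgraph, the fat property of planar graphs from Lemma~\ref{thm:fat}, and the monochromatic-neighborhood coloring of Lemma~\ref{thm:mono}, after which Lemma~\ref{lemma:large} converts the coloring into a lower bound on $\alpha(G)$.

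First I would fix a $4$-swept subgraph $G_0 \subseteq G$ with $|V(G_0)| = s(G)$; such a subgraph exists by the definition of $s(G)$. Since $G_0$ is itself a planar triangle-free graph, Lemma~\ref{thm:deg4} yields a set $S \subseteq V(G_0)$ of vertices of degree at most $4$ \emph{in $G_0$} with $|S| \geq s(G)/5$. Next, let $D_1$ be the constant from Lemma~\ref{thm:mono} and fix $t = 12$. Since the class of planar graphs has bounded expansion, Lemma~\ref{thm:fat} provides a constant $K = K(D_1,12)$ such that $G_0$ is $(D_1,12,K)$-fat. Applying the fat property to the set $S$ in $G_0$ produces $Q \subseteq S$ and $X \subseteq V(G_0) \setminus Q$ with
\begin{equation*}
  |Q| \;\geq\; \frac{|S|}{K} \;\geq\; \frac{s(G)}{5K}, \qquad |X| \;\leq\; \frac{|Q|}{12},
\end{equation*}
and such that $Q$ is $D_1$-scattered in $G_0 - X$. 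Every vertex of $Q$ still has degree at most $4$ in $G_0$, so the hypotheses of Lemma~\ref{thm:mono} are met for the pair $(G,G_0)$ and the sets $X, Q$.

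Lemma~\ref{thm:mono} then yields a $3$-coloring of $G-X$ together with an independent set $Q' \subseteq V(G) \setminus X$ of size at least $|Q| - 6|X| \geq |Q|/2$, each of whose vertices has monochromatic neighborhood. Feeding $Q'$ and $X$ into Lemma~\ref{lemma:large} gives
\begin{equation*}
  \alpha(G) \;\geq\; \frac{n - |X| + |Q'|}{3} \;\geq\; \frac{n - |Q|/12 + |Q|/2}{3} \;=\; \frac{n + 5|Q|/12}{3} \;\geq\; \frac{n + s(G)/(12K)}{3},
\end{equation*}
so the theorem holds with $c = 1/(12K)$. The substantive difficulty is essentially absorbed by the auxiliary lemmas: the fat property supplies a large scattered $Q$ while removing a controllably small $X$, and Lemma~\ref{thm:mono} withstands the separating $4$-cycles of $G$ by working through the $4$-swept subgraph $G_0$. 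The only delicate point in writing up is the choice of $t$ relative to the constant $6$ appearing in Lemma~\ref{thm:mono}: one must take $t$ strictly greater than $6$ so that $|Q'|$ dominates $|X|$ and the bound $-|X| + |Q'|$ is a positive linear function of $|Q|$, whence of $s(G)$. Any $t \geq 7$ works; $t = 12$ just makes the arithmetic clean.
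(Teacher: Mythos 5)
Your proof is correct and follows essentially the same route as the paper, which makes the analogous arbitrary choice $t=14$ where you take $t=12$; both values work and yield the same structure $G_0 \to S \to (Q,X) \to Q' \to \alpha(G)$ via Lemmas~\ref{thm:deg4}, \ref{thm:fat}, \ref{thm:mono}, and \ref{lemma:large}. One minor slip in a side remark: ``any $t\ge 7$ works'' should be $t\ge 8$, because the relevant quantity is $-|X|+|Q'| \ge |Q|-7|X| \ge |Q|(1-7/t)$, which vanishes at $t=7$; this does not affect your argument since you fixed $t=12$.
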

\begin{proof}
  Let $D_1$ be the distance from Lemma~\ref{thm:mono}.
  By Lemma~\ref{thm:fat}, there exists a constant $K$ such that every planar graph is $(D_1,14,K)$-fat. 
  Let $c=\frac{1}{10K}$.

  Let $G_0$ be a $4$-swept subgraph of $G$ such that $|V(G_0)|=s(G)$.
  Let $S$ be the set of vertices of~$G_0$ of degree at most $4$; by Lemma~\ref{thm:deg4}, we have $|S|\ge s(G)/5$.
  Since $G_0$ is $(D_1,14,K)$-fat, there exist sets $Q\subseteq S$ and $X\subseteq V(G_0)\setminus Q$ such that~$Q$ is $D_1$-scattered in $G_0-X$, $|Q|\ge |S|/K\ge \frac{s(G)}{5K}$, and $|X|\le |Q|/14$.

  By Lemma~\ref{thm:mono}, $G-X$ has a proper $3$-coloring and an independent set $Q'$ with $|Q'|\ge |Q|-6|X|$ such that the neighborhood of each vertex in $Q'$ is monochromatic.
  By Lemma~\ref{lemma:large}, we have, as required,
  \begin{equation*}
    \alpha(G)\ge\dfrac{n-|X|+|Q|-6|X|}{3}\ge \dfrac{n+|Q|/2}{3}\ge \dfrac{n+\frac{s(G)}{10K}}{3}\enspace .\qedhere
  \end{equation*}
\end{proof}

\begin{proof}[Proof of Theorem~\ref{thm:large-ind}]
  Since $G$ has no separating $4$-cycles, we have $s(G) = n$.
  Therefore, $\alpha(G)\ge \frac{n+cn}{3}=\frac{n}{3/(1+c)}$ by Theorem~\ref{thm:gen}, and we can set $\varepsilon=\frac{3c}{1+c}$.
\end{proof}

Theorem~\ref{thm:indsize} is a corollary of Theorem~\ref{thm:gen} and Lemma~\ref{lemma:tws}.

\section{Finding A Large Independent Set}
\label{sec-algo}
In this section we describe an algorithm to actually find an independent set of size at least $(n+k)/3$ in a given $n$-vertex triangle-free planar graph possessing such a set.

We will need to be able to locate separating $4$-cycles efficiently.
\begin{lemma}
\label{lemma-findsep}
  There exists a linear-time algorithm that reports a separating $4$-cycle in a given plane graph~$G$, or decides that there is no separating $4$-cycle.
\end{lemma}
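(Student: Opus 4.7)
The plan is to first reduce the problem to finding a non-facial $4$-cycle, then enumerate $4$-cycles efficiently via degeneracy, and finally argue that this enumeration runs in linear time using the planar embedding.

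For the reduction, I would use the following characterization: in the triangle-free setting in which the paper applies this lemma, any $4$-cycle $C$ is chord-free, so each of the two open disks bounded by $C$ is either a face of $G$ (and hence contains no vertex of $G$) or contains a vertex of $G$; consequently $C$ is separating if and only if $C$ does not bound a face. From the plane embedding I would extract the list of faces in linear time and hash the vertex-sets of $4$-faces into a set $\mathcal{F}_4$, reducing the task to finding a $4$-cycle of $G$ whose vertex-set is not in $\mathcal{F}_4$.

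For the enumeration, I would exploit that triangle-free planar graphs are $3$-degenerate (by Euler's formula), so a degeneracy ordering $v_1, \ldots, v_n$ with each forward-neighborhood $F(v_i)$ of size at most $3$ can be computed in linear time. Every $4$-cycle of $G$ has a unique earliest vertex $v_i$ in this ordering, and the two cycle-neighbors of $v_i$ must lie in $F(v_i)$. The algorithm would iterate over $v_i$ and over the at most $\binom{3}{2}=3$ pairs $(a,b) \subseteq F(v_i)$, searching for a common neighbor $c$ of $a$ and $b$ later than $v_i$ with $\{v_i,a,c,b\} \notin \mathcal{F}_4$; the first such $4$-cycle found is reported as a separating witness, and if no such $4$-cycle is found the algorithm declares that none exists.

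The main obstacle is keeping the common-neighbor enumeration linear. My approach is to list common neighbors of a pair $(a,b)$ by walking the faces incident to $a$ in the planar rotation order and probing the rotation at $b$, so that each common neighbor is produced at constant amortized cost and can be charged to a distinct face of $G$ on whose boundary it lies. Combined with early termination as soon as a non-facial $4$-cycle is found, this yields $O(n)$ total work, because before termination every $4$-cycle examined belongs to $\mathcal{F}_4$, each element of $\mathcal{F}_4$ is visited a bounded number of times across the $v_i$-$(a,b)$ loop, and $|\mathcal{F}_4|=O(n)$ by Euler's formula. For general (possibly non-triangle-free) plane graphs, the same scheme works after replacing the face-vs.-vertex dichotomy in Step~1 by a local check for chords of the candidate cycle in the embedding, which costs only constant additional time per cycle.
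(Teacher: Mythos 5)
There is a real gap at the step you yourself flag as the main obstacle: the common-neighbor enumeration. Anchoring each $4$-cycle at its \emph{earliest} vertex $v_i$ means that three of its vertices ($v_i$ and the pair $a,b\subseteq F(v_i)$) are cheap to list, but the fourth vertex $c$ is an arbitrary common neighbor of $a$ and $b$, and $a,b$ may have unbounded degree. Your claim that such $c$ can be produced ``at constant amortized cost'' by walking the faces at $a$ and probing the rotation at $b$ is exactly the statement that needs proof, and as described it fails: walking the faces (or the rotation) at $a$ costs $\Theta(\deg a)$ per query regardless of how many common neighbors exist, and this scanning work over neighbors of $a$ that are \emph{not} adjacent to $b$ is charged to nothing in your accounting --- charging only the successfully found common neighbors to incident $4$-faces does not bound it. Moreover, even the claim that each element of $\mathcal{F}_4$ is ``visited a bounded number of times'' is not justified: the same pair $\{a,b\}$ is generated once for every earlier common neighbor $v_i$, and bounding the number of such repetitions before termination requires an extra combinatorial fact (e.g.\ that four or more common neighbors of $a,b$ force a separating $4$-cycle among the candidates examined at the first occurrence, via the $K_{2,4}$ embedding structure), which you never state. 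A linear bound along your lines is probably salvageable, but it needs precisely these missing ingredients: an $O(1)$ adjacency oracle (via a bounded-outdegree orientation), an argument that large common neighborhoods trigger early termination, and a Chiba--Nishizeki-type bound $\sum \min(\deg a,\deg b)=O(n)$ over the occurring pairs; none of this is in the proposal.

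For comparison, the paper avoids the problem by anchoring at the \emph{latest} vertex $v_{i_4}$ of the cycle, so its two cycle-neighbors lie in a back-neighborhood of size at most $5$, and then splitting on where the opposite vertex $v_{i_2}$ falls. If $v_{i_2}$ precedes $\max(i_1,i_3)$ it is again found among at most $5$ back-neighbors, giving $O(n)$ candidates outright; in the remaining case both $v_{i_2}$ and $v_{i_4}$ are late, and the paper enumerates the $O(n)$ triples (pair of back-neighbors, apex), radix-sorts and groups them by the pair, and tests only cycles $ac_ibc_j$ with $i\le 4$ per group --- linear total work --- using the key observation that if a pair has $t\ge 4$ common later neighbors, one of these tested cycles is necessarily separating. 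That observation is the paper's substitute for the unbounded common-neighbor search that your scheme leaves unresolved. (Minor points: your reduction ``separating iff non-facial'' is fine for simple triangle-free plane graphs and matches the paper's constant-time facial test, and the $3$- versus $5$-degeneracy issue for general plane graphs is cosmetic.)
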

\begin{proof}
  Note that given edges of a $4$-cycle $C$ in $G$, we can test whether $C$ bounds a face in constant time (given a standard representation of the plane embedding of $G$), and thus we can decide whether $C$ is separating.

  Firstly, greedily find an ordering $v_1$, \ldots, $v_n$ of the vertices of $G$ so that for $1\le i\le n$, the vertex $v_i$ is adjacent to at most $5$ vertices in the set $\{v_1, \ldots, v_{i-1}\}$.
  Next, enumerate all $4$-cycles $v_{i_1}v_{i_2}v_{i_3}v_{i_4}$ such that
$i_2<\max(i_1,i_3)<i_4$.
  This can be done in linear time, by first choosing $v_{i_4}$, then $v_{i_1}$ and $v_{i_3}$ among at most $5$ neighbors of $v_{i_4}$ that precede it in the ordering, and finally $v_{i_2}$ among at most $5$ neighbors of $v_{\max(i_1,i_3)}$ that precede it in the ordering.
  Test each of these cycles, and if any is separating, report it.

  Hence, we can now assume that if $G$ contains a separating $4$-cycle, then we can label its vertices as $v_{i_1}v_{i_2}v_{i_3}v_{i_4}$ so that
$i_4>i_2>\max(i_1,i_3\}$.
  Enumerate all triples $(a,b,c)$ such that $a<b<c$ and $v_av_c,v_bv_c\in E(G)$ (note that there are at most $\frac{5}{2}n$ such triples) and sort the list of triples (compared lexicographically) in $O(n)$ time using radix sort.
  Group the triples according to the first two entries.
  For each such group $(a,b,c_1)$, $(a,b,c_2)$, \ldots, $(a,b,c_t)$, test all $4$-cycles of form $ac_ibc_j$ with $1\le i\le \min(t,4)$, and if any is separating, report it.
  Note that if $t\ge 4$, one of the tested $4$-cycles necessarily is separating.
\end{proof}

Suppose that the algorithm of Theorem~\ref{thm:mainfpt-trianglefree} decided that a planar triangle-free graph $G$ with $n$ vertices contains an independent set of size at least $(n+k)/3$.
In case that $G$ has tree-width $O(\sqrt{k})$, the algorithm for finding independence number of a graph with bounded tree-width can also easily return one of the largest independent sets.

Suppose that $G$ has tree-width $\Omega(\sqrt{k})$, i.e., $s(G)=\Omega(k)$.
In this case, we know that $G$ contains an independent set of size at least $(n+k)/3$ by Theorem~\ref{thm:gen}.
Let us now go over the proof of Theorem~\ref{thm:gen} in order to obtain an algorithm to report this set.

Firstly, we need to find a $4$-swept subgraph $G_0$ of $G$ of size $\Omega(k)$.  This can be accomplished by applying the procedure from the first paragraph
of the proof of Lemma~\ref{lemma:tws} to obtain $4$-swept subgraphs $H_1$, \ldots, $H_m$ of $G$ and choosing $G_0$ as the largest of these graphs.
In order to obtain each of the $4$-swept subgraphs, we need to locate a separating $4$-cycle with minimal interior.
One way to do so is to find an arbitrary separating $4$-cycle using Lemma~\ref{lemma-findsep}, then find an arbitrary separating $4$-cycle in the graph drawn in its closed interior, etc., until a $4$-cycle with no separating $4$-cycles in its interior is found.
In total, we can find $G_0$ in time $O(n^3)$.

Let $D_1$ be the distance from Lemma~\ref{thm:mono} and let $K$ be the constant given by Lemma~\ref{thm:fat} such that every planar graph is $(D_1,14,K)$-fat.  Let $S_0$ be the set of vertices of $G_0$ of degree at most $4$.
Let $S$ be an arbitrarily chosen subset of $S_0$ of size $2Kk$.
We now proceed as in the proof of Lemmas~\ref{thm:fat} and \ref{thm:spec}
to obtain $Q\subseteq S$ and $X\subseteq V(G_0)\setminus Q$ such that~$Q$ is $D_1$-scattered in $G_0-X$, $|Q|\ge |S|/K=2k$, and $|X|\le |Q|/14$.
Note that a $D_1$-th oriented augmentation of $G_0$ can be obtained in linear time; we refer to~\cite{grad2} for details.
Observe that the rest of the steps described in the proofs of the lemmas can be performed in linear time as well.

Let $Q'$ be the set of size $|Q|-6|X|$ obtained as in Lemma~\ref{thm:mono}, such that $G-X$ has a proper $3$-coloring in that the neighborhood of each vertex in $Q'$ is monochromatic.
Note that $|Q'|\le |Q|\le |S|=O(k)$.
Since all vertices of $Q'$ have degree at most four in $G-X$, this condition amounts to fixing colors of $O(k)$ vertices of $G-X$.
Hence, we can obtain such a $3$-coloring of $G-X$ using an algorithm by Dvo\v{r}\'{a}k et al.~\cite{trfree7} in time $2^{O(k)}n^2$.
Finally, we report the largest of the three independent sets constructed as in Lemma~\ref{lemma:large}.

Therefore, we can modify the algorithm of Theorem~\ref{thm:mainfpt-trianglefree} to also report the large independent set if it exists, at the expense of increasing the its time complexity to $O(n^3)+2^{O(k)}n^2$.
Let us remark that this can be improved to $2^{O(k)}n^2$ by a more involved implementation of the first step, using a semidynamic data structure described by Dvo\v{r}\'{a}k et al.~\cite{dvorak2013testing} to repeatedly find separating $4$-cycles.

\section{Discussion}
\label{sec:discussion}
We gave a fixed-parameter algorithm for finding an independent set of size at least $n/3 + k$ in triangle-free planar graphs on $n$ vertices, for every integer $k\ge 0$.
Let us remark that the subexponential dependence on $k$ in the running time of our algorithm is optimal, under the Exponential Time Hypothesis (this follows from a reduction by Madhavan~\cite{Madhavan1984}).

Several intriguing questions remain.
Does the problem admit a polynomial kernel?
That is, can any triangle-free planar graph on $n$ vertices be efficiently (in polynomial time) compressed to an equivalent graph $G'$ on $k^{O(1)}$ vertices?
Also, while we can decide the existence of the independent set in linear time (in $n$), we can only find such an independent set in quadratic time.
Can this be improved?

Unfortunately, it is unlikely our techniques could be used for {\sc Planar Independent Set-ATLB} in general planar graphs.
The analogue of Proposition~\ref{thm:3coloringextension} is false for general planar graphs, and there exist $n$-vertex planar graphs with largest independent set of size $n/4$ and arbitrarily large tree-width.

\bibliographystyle{abbrv}
\bibliography{draftbib}
\end{document}